\newtheorem{assumption}[theorem]{Assumption}
\newcommand{\crefalg}[1]{\hyperref[#1]{Algorithm~\ref*{#1}}}
\newcommand{\cLb}{{\mathcal{L}_{\beta}}}
\newcommand{\cL}{\mathcal{L}}
\newcommand{\Et}{E_{\mathrm{KS}}}
\newcommand{\R}{\mathbb{R}}
\newcommand{\Rn}{\mathbb{R}^{n}}  
\newcommand{\Rnp}{\mathbb{R}^{n\times p}} 
\newcommand{\Rnn}{\mathbb{R}^{n\times n}}   
\newcommand{\Sn}{\mathbb{SR}^{n\times n}}
\newcommand{\Sp}{\mathbb{SR}^{p\times p}}
\newcommand{\TX}{{\mathcal{T}_{\stiefel^B}}(X)}
\newcommand{\us}{\underline{\sigma}}
\newcommand{\diag}{\mathrm{diag}}
\newcommand{\Diag}{\mathrm{Diag}}
\newcommand{\ff}{_{\mathrm{F}}}
\newcommand{\fs}{^2_{\mathrm{F}}}
\newcommand{\inv}{^{-1}}
\newcommand{\st}{\mathrm{s.\,t.}\,\,} 
\newcommand{\stiefel}{{\cal S}_{n,p}}
\newcommand{\tr}{\mathrm{tr}}
\newcommand{\zz}{^{\top}}
\newcommand{\dkh}[1]{\left(#1\right)}
\newcommand{\norm}[1]{\left\|#1\right\|}
\newcommand{\abs}[1]{\left|#1\right|}
\newcommand{\jkh}[1]{\left\langle#1\right\rangle}
\definecolor{Gray}{rgb}{0.5,0.5,0.5}
\definecolor{myred}{rgb}{0.7,0,0.1}
\definecolor{mygreen}{rgb}{0,0.6,0.2}
\definecolor{myblue}{rgb}{0.5,0,1}
\definecolor{myrvs}{rgb}{0.25,0.45,0.85}
\begin{document}

\title{An orthogonalization-free parallelizable framework for all-electron calculations in density functional theory\thanks{Submitted to the editors July 28, 2020.\funding{BG was supported by the Fonds de la Recherche Scientifique -- FNRS and the Fonds Wetenschappelijk Onderzoek -- Vlaanderen under EOS Project no. 30468160. GH was supported from FDCT of Macao SAR (FDCT 029/2016/A1), MYRG of University of Macau (MYRG2017-00189-FST, MYRG2019-00154-FST), and National Natural Science Foundation of China (Grant Nos. 11922120, 11871489, and 11401608). YK was supported by the Academic Research Fund of the Ministry of Education of Singapore under grant No. R-146-000-291-114. XL was supported in part by the National Natural Science Foundation of China (No. 11971466, 11991021 and 11991020), Key Research Program of Frontier Sciences, Chinese Academy of Sciences (No. ZDBS-LY-7022), the National Center for Mathematics and Interdisciplinary Sciences, Chinese Academy of Sciences and the Youth Innovation Promotion Association, Chinese Academy of Sciences.}}}

\headers{Orthogonalization-free framework for KSDFT}{B. Gao, G. Hu, Y. Kuang, and X. Liu}

\author{
	Bin Gao\thanks{ICTEAM Institute, UCLouvain, Louvain-la-Neuve, Belgium (\email{gaobin@lsec.cc.ac.cn}).}
	\and
	Guanghui Hu\thanks{Department of Mathematics, University of Macau, Macao SAR, China;	Zhuhai UM Science \& Technology Research Institute, Guangdong Province, China (\email{garyhu@umac.mo}).}
	\and
	Yang Kuang\thanks{Corresponding author. Department of Mathematics, National University of Singapore, Singapore (\email{matkuan@nus.edu.sg}).}
	\and
	Xin Liu\thanks{State Key Laboratory of Scientific and Engineering Computing, Academy of Mathematics and Systems Science, Chinese Academy of Sciences, and University of Chinese Academy of Sciences, China (\email{liuxin@lsec.cc.ac.cn}).}
}

\maketitle
\begin{abstract}
All-electron calculations play an important role in density functional theory, in which improving computational efficiency is one of the most needed and challenging tasks. In the model formulations, both nonlinear eigenvalue problem and total energy minimization problem pursue orthogonal solutions. Most  existing algorithms for solving these two models invoke orthogonalization process either explicitly or implicitly in each iteration. Their efficiency suffers from this process in view of its cubic complexity and low parallel scalability in terms of the number of electrons for large scale systems. To break through this bottleneck, we propose an orthogonalization-free algorithm framework based on the total energy minimization problem. It is shown that the desired orthogonality can be gradually achieved without invoking orthogonalization in each iteration. Moreover, this framework fully consists of Basic Linear Algebra Subprograms (BLAS) operations and thus can be naturally parallelized. The global convergence of the proposed algorithm is established. We also present a precondition technique which can dramatically accelerate the convergence of the algorithm. The numerical experiments on  all-electron calculations show the efficiency and high scalability of the proposed algorithm.
\end{abstract}

\begin{keywords}
  density functional theory, all-electron calculations,
  orthogonalization-free, parallel algorithm
\end{keywords}

\begin{AMS}
  35Q55, 65N30, 90C06
\end{AMS}

\section{Introduction}
We aim to find the ground state solution of a molecular system from
 all-electron calculations.  In view of Kohn--Sham density
functional theory (KSDFT) \cite{kohn1965self}, this can be achieved by
solving the lowest $p$ eigenpairs of the Kohn--Sham equation:
\begin{equation}\label{eq:KS}
  \left\{ 
    \begin{array}{lr} 
      \hat{H}\psi_l(\bm{r}) = \varepsilon_l\psi_l(\bm{r}),
      &l=1,2,\dots ,p,\\
      \displaystyle\int_{\mathbb{R}^3} \psi_l\psi_{l'} d\bm{r} = \delta_{ll'},
      &l,l'=1,2,\dots ,p, 
    \end{array} \right. 
\end{equation}
where $\hat{H}$ is the Hamiltonian operator, $\psi_l(\bm{r})$ is the $l$-th wavefunction (eigenfunction), $\varepsilon_l$ refers to the corresponding eigenenergy, $\delta_{ll'}$ is the Kronecker delta function, and $p$ denotes the number of electrons. Alternatively, the ground state solution can be obtained by minimizing the total energy with orthogonality constraints
\cite{payne1992iterative}:
\begin{equation} \label{prob:KS-continuous}
\begin{array}{cl}
\min\limits_{ \varPsi}&\Et(\varPsi) \\
\st &  \jkh{\varPsi,\varPsi}=I_p,
\end{array}
\end{equation}
where $\varPsi = (\psi_{1},\psi_{2},\dots,\psi_{p})$, $\Et$ denotes the Kohn--Sham total energy, $\jkh{\cdot,\cdot}$ stands for the inner product, and $I_p$ denotes the $p\times p$ identity matrix. For notation brevity, we drop the subscript and let $I=I_p$. The detailed expressions of the Hamiltonian operator and the Kohn--Sham total energy are introduced in the next section.

\subsection{Literature review and challenges}
In electronic structure calculations, the pseudopotential approaches have proven to be successful in predicting electrical, magnetic and chemical properties for a wide range of materials \cite{pickett1989pseudopotential}.  However, the pseudopotentials can hardly construct the transition metals accurately \cite{levashov2010pseudopotential} and tend to mispredict the material properties under extreme environment \cite{xiao2010first}. As a result, all-electron calculations which treat the Coulomb external potential exactly are in demand.

One of the most challenging aspects in all-electron calculations is the computational efficiency, which is usually dominated by two factors: the singularities arising from the Coulomb external potential and the orthogonality constraints of the wavefunctions.

To handle the singularities, the numerical discretization is generally required to be well designed in a manner such that it is able to capture the sharp variations of the orbitals and meanwhile describe the results on the regions where the orbitals vary slightly with the least effort. We focus on the finite element discretization (FEM) \cite{tsuchida1996adaptive, suryanarayana2010non, bao2012h, chen2014adaptive} since it has local basis and allows a spatially adaptive resolution. Other discretizations handling the singularities can be found in \cite{ahlrichs1989electronic, cohen2013locally} and references therein.

When the quantum system is large, all-electron calculations turn into  expensive \cite{lin2019numerical}. In particular, to keep the orthogonality of the orbitals becomes the bottleneck in most existing algorithms. The self-consistent field (SCF) method and its variants \cite{kohn1965self, kerker1981efficient} are commonly used to solve the KS equation \eqref{eq:KS}. However, the global convergence of the SCF-based algorithms can hardly be guaranteed \cite{liu2014convergence, liu2015analysis} and hence good initial guesses are often crucial for their performance. Since they lack robustness, the bad performance is often observed in numerical experiments \cite{yang2007trust}. This motivates the research on solving the total energy minimization problem \eqref{prob:KS-continuous} directly. Most of the first-order methods, such as QR retraction \cite{zhang2014gradient} and multipliers correction framework \cite{gao2018new}, carry out a feasible update. Namely, certain orthogonalization process is invoked in each iteration. Note that the orthogonalization process costs at least $\mathcal{O}(p^3)$ per  iteration. Hence, these methods are not competent in solving large quantum systems due to this cubic complexity and the low   scalability of any orthogonalization process.

Several algorithms have been exploited to avoid the {orthogonalization}.  Linear scaling methods \cite{bowler2012methods} build the solutions by direct minimization of unconstrained variational formulations. Note that most of them require to estimate the upper bound of the eigenvalue of the Hamiltonian \cite{lin2019numerical}, which is intractable in practice. Recently, an infeasible optimization algorithm based on the augmented Lagrangian method {has been} proposed in \cite{gao2018parallelizable}. Here, ``infeasible" indicates that the iterate is not required to satisfy the constraints in each iteration. The orthogonality can be guaranteed at any cluster point of the iteration sequence. Another favored property of this algorithm is that it is not sensitive with the choices of initial guess and parameters which makes it robust. Moreover, it is illustrated both theoretically and numerically that this algorithm does not highly rely on any priori knowledge of the studied system.  All the calculations in which fully consist of BLAS operations. Thus it can be naturally parallelized. In view of these features, a parallelizable framework based on this infeasible minimization method for all-electron calculations is proposed.

\subsection{Contribution}
In this paper, we provide a competitive algorithm framework for all-electron calculations in the density functional theory. The framework consists of four parts shown in \cref{fi:frameKS}, i.e., the pre-processing part for configuring the problem, the discretization part for numerically discretizing the continuous problem, the solving part for obtaining the solutions of the discretized system, and the post-processing part for transforming the numerical solutions for the further applications.

\begin{figure}[htbp]
	\centering \footnotesize
	\begin{tikzpicture}[
	scale=.85,
	auto,
	decision/.style = { diamond, aspect=2, draw=gray,
		thick, fill=gray!10, text width=4em, text badly centered,
		inner sep=1pt},
	block/.style = { rectangle, draw=gray, thick, fill=gray!10,
		text width=6em, text centered, rounded corners,
		minimum height=2em },
	line/.style = { draw, thick, ->, shorten >= 0.5pt},
	]
	
	\node [block] at (0,0) (prob) {pre-processing};
	
	\node at (3,1.7) (null1) {};
	
	\node [block] at (2.75,0) (genh) {discretization};
	
	\node [block,rectangle, draw=gray, thick,dashed, fill=gray!0,
	text width=13.5em, text centered, rounded corners, minimum
	height=6.5em] at (6.5,0.35) (solver) {};
	
	\node [color=gray] at (6.5,-1.05) {solving};
	
	\node [block,text width=4em,minimum height=3em] at (5.2,0)
	(iteration) {main iteration};
	
	\node [decision] at (7.5,0) (decision) {stop?};
	
	\node [block,text width=7em] at (10.5,0) (end0) {post-processing};
		
	\node [above] at (7.2,0.5) {No};
	
	\begin{scope} [every path/.style=line,thick,shorten >= 0.5pt]
	\path (prob)    --   (genh);   
	\path (genh)      --   (iteration);
	\path (iteration)      --   (decision);
	\path (decision) --   node {Yes} (end0); 
	\path (decision) -- (7.5,1.1) -- (5.2,1.1) -- (iteration);
	\end{scope}
	
	\end{tikzpicture}
	\caption{Flowchart of the framework for ground state
		calculations. \label{fi:frameKS}}
\end{figure}
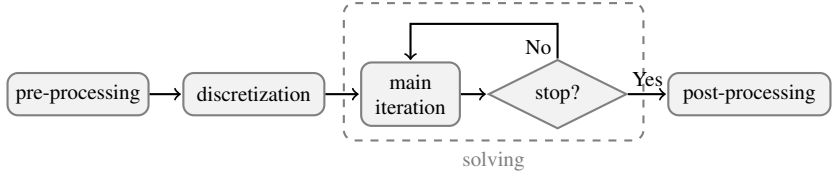

The efficiency of all-electron calculations benefits from the following aspects of the proposed framework in \cref{fi:frameKS}: i). a quality finite element space is designed for the given electronic structure based on the \emph{a priori} analysis; ii). an orthogonalization-free method is proposed and analyzed for the discretized minimization problem; iii). high scalability is successfully demonstrated by numerical examples.

More specifically, in preparation of the tetrahedron mesh, the decay of the external potential is studied with the linear interpolation theory in \cite{huang2010adaptive, suryanarayana2010non, kuang2019on}, and a strategy on generating radial mesh for optimally capturing such decay is designed for a given electronic structure. It is noted that a quality finite element space would be built based on the radial mesh, and the efficiency of the algorithm would benefit from the sparsity of the discretized system and the mature and robust solvers for the sparse system such as the algebraic multigrid method.

The new method for the discretized optimization problem \eqref{prob:KS-continuous} is proposed by extending the parallelizable column-wise augmented Lagrangian  (PCAL) \cite{gao2018parallelizable} from the following two aspects. First, the PCAL is revised to handle the minimization problem with general orthogonality constraints $X^{\top}BX = I$ rather than the standard ones $X^{\top}X = I$. The global convergence of the new method is established theoretically. Second, a precondition strategy is proposed for the class of the PCAL methods, and a specific preconditioner is designed for all-electron calculations, which brings the dramatic acceleration for the convergence in the simulations.

As an attractive feature of the proposed algorithm, the robustness is successfully shown by a variety of numerical experiments, i.e., a random initial guess works for all numerical experiments in this paper, and the numerical convergence of the algorithm is not sensitive to the selection of the parameters. Finally, the high scalability of the algorithm is demonstrated by the numerical examples, which obviously indicates the potential of our algorithm for the large scale systems.

\subsection{Notation and organization}
$\Sp:=\{S\in\R^{p\times p}\mid S\zz = S\}$ refers to the set of $p\times p$ real symmetric matrices. $\sigma_{\min}(A)$ denotes the smallest singular value of given real matrix $A$. $\Diag(v)\in\Sn$ denotes a diagonal matrix with all entries of $v\in\Rn$ in its diagonal, and $\diag(A)\in\Rn$ extracts the diagonal entries of matrix $A\in\Rnn$. For convenience, $\Theta(M):=\Diag(\diag(M))$ represents the diagonal matrix with the diagonal entries of square matrix $M$ in its diagonal. $\mathrm{sym}(A):=\frac{1}{2}(A+A\zz)$ stands for the average of a square matrix and its transpose.

The organization of this paper is as follows. The KSDFT and its discretization are presented in \cref{sec:ksdft}. In \cref{sec:alg}, we present the algorithm and its convergence results. The implementation details of the proposed framework are introduced in \cref{sec:imp} and the numerical experiments are reported in \cref{sec:num}. In the end, we draw a brief conclusion and introduce the future works.

\section{Finite Element Discretization for KSDFT} \label{sec:ksdft}

In this section, we introduce the detailed formulations for KSDFT and
the discretization part as illustrated in \cref{fi:frameKS}.

\subsection{KSDFT}
We consider a molecular system in $\mathbb{R}^3$ consisting of $M$ nuclei of charges ${Z_1,\dots,Z_M}$ locating at the positions ${\bm{R}_1,\dots,\bm{R}_M}$ and $p$ electrons in the non-relativistic setting. The atomic unit is adopted in this work. Thus the Hamiltonian operator $\hat{H}$ in the Kohn--Sham equation \eqref{eq:KS} can be written as
\begin{equation}\label{eq:Hamiltonian}
  \hat{H}  = -\frac{1}{2}\nabla^2 +   V_{\mathrm{ext}}(\bm{r})
  +V_{\mathrm{Har}}([\rho];\bm{r})+ V_{\mathrm{xc}}([\rho];\bm{r}),
\end{equation}
where the notation $V([\rho];\bm{r})$ implies that $V$ is a functional of the electron density
$ \rho(\bm{r})=\sum_{l=1}^{p}\lvert \psi_l(\bm{r})\lvert^2$. The first term  $-\nabla^2/2$ in $\hat{H}$ is the kinetic operator. The second term in $\hat{H}$ describes the Coulomb external potential due to the nuclei which takes the form
\begin{equation}\label{eq:ext}
V_{\mathrm{ext}}(\bm{r})=-\sum_{j=1}^{M}\frac{Z_j}{\lvert
	\bm{r}-\bm{R}_j\rvert}.
\end{equation}
The third term is the Hartree potential describing the Coulomb repulsion among the electrons
\begin{equation}\label{eq:har}
  V_{\mathrm{Har}}([\rho];\bm{r})=\int_{\mathbb{R}^{3}} \frac{\rho(\bm{r'})}
  {\lvert \bm{r}-\bm{r'} \lvert} d\bm{r'}.
\end{equation}
The last term $V_{\mathrm{xc}}$ stands for the exchange-correlation potential, which is caused by the Pauli exclusion principle and other non-classical Coulomb interactions. Note that the analytical expression for the exchange-correlation term is unknown and therefore an approximation is needed. Specifically, the local density approximation (LDA) from the library Libxc \cite{marques2012libxc} is adopted in this work.

The total energy of the given quantum system consists of several
parts:
\begin{equation}\label{eq:totalEnergy}
  \Et = E_{\mathrm{kinetic}}+ E_{\mathrm{ext}}
  + E_{\mathrm{Har}} 
  + E_{\mathrm{xc}} +  E_{\mathrm{nuc}},
\end{equation}
where $E_{\mathrm{kinetic}}$ is the kinetic energy, and
$E_{\mathrm{ext}}$, $E_{\mathrm{Har}}$, $E_{\mathrm{xc}}$, and
$E_{\mathrm{nuc}}$ are the potential energies induced by
$V_{\mathrm{ext}}$, $V_{\mathrm{Har}}$, $V_{\mathrm{xc}}$, and the
nucleus-nucleus potential, respectively. 
Denoting the exchange-correlation energy per particle by
$\epsilon_\mathrm{xc}(\rho)$, then $V_\mathrm{xc}$ is the functional derivative of
$\epsilon_\mathrm{xc}(\rho)$ with respect to $\rho$, i.e.,
$V_{xc} = \delta\epsilon_\mathrm{xc}(\rho)/\delta\rho$. As a result, it follows that
\begin{align*}\label{eq:energies}
  E_{\mathrm{kinetic}} &= \frac{1}{2}\sum_{l=1}^{p}\int_{\mathbb{R}^3}\lvert \nabla
                         \psi_l\lvert^2 d\bm{r}, \quad
  E_{\mathrm{ext}} = \int_{\mathbb{R}^3}V_{\mathrm{ext}} \rho(\bm{r}) d\bm{r},\quad
  E_{\mathrm{Har}} = \frac{1}{2} \int_{\mathbb{R}^3} V_{\mathrm{Har}}\rho(\bm{r})d\bm{r}, \\ E_{\mathrm{xc}} &= \int_{\mathbb{R}^3} \epsilon_{\mathrm{xc}}\rho(\bm{r})d\bm{r},\quad
  E_{\mathrm{nuc}} = \sum_{j=1}^M\sum_{k=j+1}^{M}\frac{Z_jZ_k}{\left|
                    \bm{R}_j-\bm{R}_k \right|}.
\end{align*}
Note that $E_{\mathrm{nuc}}$ is a constant for the given system.

The ground state of the given system can be obtained from solving either the KS equation \eqref{eq:KS} or the total energy minimization problem \eqref{prob:KS-continuous}. In order to numerically solve the continuous problem, we consider the finite element discretization.

\subsection{Finite element discretization} \label{sec:fem} 
In practical simulations, a bounded polyhedral domain
$\Omega \subset \mathbb{R}^3$ is served as the computational
domain. Thus the variational form of the Kohn--Sham equation
\eqref{eq:KS} on $\Omega$ can be formulated as: Find
$(\varepsilon_l,\psi_l)\in \mathbb{R}\times H_0^1(\Omega)$,
$l=1,2, \dots, p$, such that
\begin{equation}
  \label{eq:vf}
  \left\{
  \begin{array}{l}
    \displaystyle\int_{\Omega} \varphi \hat{H} \psi_l d\bm{r} =
    \varepsilon_l\int_{\Omega}\psi_l \varphi d\bm{r}, 
    \quad \forall \varphi\in H_0^1(\Omega),\\[3mm]
    \displaystyle\int_{\Omega}\psi_l\psi_{l'} d\bm{r} = \delta_{ll'},
    \quad l' = 1,2,\dots, p,
  \end{array}\right.
\end{equation}  
where
$H_0^1(\Omega) = \{\varphi \in H^1(\Omega):\varphi|_\Omega=0\}$ and $H^1(\Omega)$ is a standard Sobolev space. 

To build a high quality finite element space to approximate the solution of \eqref{eq:vf} in all-electron calculations, the singularities stemming from the Coulomb potential in \eqref{eq:ext} should be prudently treated. In this work, we adopt a radial mesh generation strategy to resolve the difficulty brought by the singularities; see \cref{sec:radialmesh} for details.

Assume that the linear finite element space $V_h\subset H_0^1(\Omega)$ is constructed on the bounded domain $\Omega$ partitioned by
$\mathcal{T} =\{\mathcal{T}_K,K=1,2,\dots,N_\mathrm{ele}\}$, where
$N_\mathrm{ele}$ represents the total number of elements of
$\mathcal{T}$. Several commonly used notations in $V_h$ are defined
here. The basis functions are denoted by $\varphi_i$, $i=1,\dots ,n$,
where $n$ is the dimension of $V_h$ and the set of basis functions is
denoted by $ \mathcal{N} = (\varphi_1,\dots,\varphi_{n})^{\top}$. We
construct the matrix of basis function $\mathcal{B}$ with
$\mathcal{B}_{i,j} = \varphi_i \varphi_j$, then the symmetric mass
matrix $B\in \Sn$ can be obtained from
$B_{i,j}=\int_{\Omega} \mathcal{B}_{i,j}d\bm{r}$. Furthermore, a sequence of matrices
$\{ G_{(l)}\in \Sn, l = 1,\dots,n\}$ with the entries
$( G_{(l)})_{i,j} = \int_{\Omega} \mathcal{B}_{i,j} \varphi_l d\bm{r}$ are
introduced. The discretized Laplacian $L\in \Sn$ on $V_h$ is defined
as
$L_{i,j}=\int_{\Omega}\nabla \varphi_j\cdot \nabla \varphi_i d\bm{r}$.

On the finite element space $V_h$, the discretized variation form of
(\ref{eq:vf}) turns out: Find
$(\varepsilon_l^h,\psi_l^h)\in \mathbb{R}\times V_h$, $l=1,2,\dots,p$,
such that
\begin{equation}
  \label{eq:discretevf}
  \left\{
  \begin{array}{l}
    \displaystyle\int_{\Omega} \varphi \hat{H} \psi_l^h d\bm{r} =
    \varepsilon_l\int_{\Omega}\psi_l^h \varphi d\bm{r}, 
    \quad \forall \varphi\in V_h ,\\[3mm]
    \displaystyle\int_{\Omega}\psi_l^h\psi_{l'}^h d\bm{r} = \delta_{ll'},
    \quad l' = 1,2,\dots, p.
  \end{array}\right.
\end{equation}
We express the $l$-th wavefunction as
$\psi_l^h=\sum_{i=1}^{n}X_{i,l}\varphi_i = X_l^{\top}\mathcal{N}$,
where $X \in \mathbb{R}^{n\times p}$ and $X_{i,l}$~stands for the
$i$-th degree of freedom of $\psi_l^h$. Then the electron density
 can be rewritten~as
\begin{equation*}
  \rho(\bm{r}) =  \sum_{l=1}^p(X_l^{\top}\mathcal{N})(X_l^{\top}\mathcal{N})
  = \sum_{l=1}^{p}X_l^{\top}\mathcal{B}X_l 
  = \mbox{tr}(X^{\top}\mathcal{B}X).
\end{equation*}

Note that the Hartree potential $V_{\mathrm{Har}}$ in \cref{eq:har} is also the solution to the Poisson equation $-\nabla^2 V_{\mathrm{Har}} = 4 \pi \rho(\bm{r})$. We denote the discretized Hartree potential by $U(X)\in\mathbb{R}^n$ such that $V_{\mathrm{Har}}= U(X)\zz\mathcal{N}$. After the finite element discretization on the Poisson equation, $U$~is calculated from the linear system $LU(X) = 4\pi \left(\mbox{tr}(X^{\top}G_{(1)}X),\dots,\mbox{tr}(X^{\top}G_{(n)}X)\right)^{\top}$. In practical simulations, this linear system is solved by an efficient algebraic multigrid method~\cite{bao2012h}. 

Due to the arbitrary of $\varphi$ in \eqref{eq:discretevf}, we can
choose $\varphi = \varphi_i, i=1,\dots,n$. In view
of above expressions, finding the solution of the discretized variational form
\eqref{eq:discretevf} turns out solving the generalized nonlinear
eigenvalue problem:
\begin{equation}\label{eq:KS-gep}
  \left\{
    \begin{array}{ll}
      H(X)X= BX\Xi, \\ [6pt]
      X^{\top} BX = I_{p},
    \end{array}
  \right.
\end{equation}
where $\Xi = \Diag(\varepsilon_1^h,\dots,\varepsilon_p^h)$, $H(X)\in\Sn$ is the discretized Hamiltonian matrix which can be formulated from \eqref{eq:Hamiltonian} as
\begin{equation}\label{eq:hamiltonian} 
  H(X) = \frac{1}{2}L + M_{\mathrm{ext}}+ M_{\mathrm{Har}}(X)+ M_{\mathrm{xc}}(X).
\end{equation} 
The matrices
$M_{\mathrm{ext}}, M_{\mathrm{Har}}(X), M_{\mathrm{xc}}(X)\in
\mathbb{R}^{n\times n}$ are defined as
\begin{equation*}
  (M_{\mathrm{ext}})_{i,j}=\int_{\Omega}V_{\mathrm{ext}}\mathcal{B}_{i,j} d\bm{r},\quad
  (M_{\mathrm{Har}})_{i,j}=\int_{\Omega}V_{\mathrm{Har}}\mathcal{B}_{i,j} d\bm{r},\quad
  (M_{\mathrm{xc}})_{i,j}=\int_{\Omega}V_{\mathrm{xc}}\mathcal{B}_{i,j} d\bm{r}.
\end{equation*}

We now represent the total energy \eqref{eq:totalEnergy} in the
discretized form:
\begin{align*}
  E_\mathrm{kinetic}(X) &= \frac{1}{2}\sum_{l=1}^p\int_{\Omega}\nabla
  \psi_l\cdot \nabla \psi_l d\bm{r} = \frac{1}{2}\sum_{l=1}^p
  \int_{\Omega}X_l^{\top}\nabla\mathcal{N} \cdot X_l^{\top}\nabla \mathcal{N} d\bm{r}
  = \frac{1}{2} \mbox{tr}(X^{\top}LX),\\
  E_\mathrm{ext}(X) &=  \int_{\Omega}V_{\mathrm{ext}}\rho(\bm{r}) d\bm{r}
  = \int_{\Omega} V_{\mathrm{ext}}\mbox{tr}(X^{\top}\mathcal{B}X) d\bm{r} = \mbox{tr}(X^{\top}M_{\mathrm{ext}}X),  \\  
   E_\mathrm{Har}(X) &= \frac{1}{2}\int_{\Omega} V_{\mathrm{Har}}\rho(\bm{r}) d\bm{r}
  = \frac{1}{2}\int_{\Omega} V_{\mathrm{Har}}\mbox{tr}(X^{\top}\mathcal{B}X) d\bm{r}  = \frac{1}{2}\mbox{tr}(X^{\top}M_{\mathrm{Har}}(X)X), \\  
  E_\mathrm{xc}(X) &= \int_{\Omega} \epsilon_{\mathrm{xc}} \rho(\bm{r}) d\bm{r} =  \int_{\Omega} \epsilon_{\mathrm{xc}}\mbox{tr}(X^{\top}\mathcal{B}X) d\bm{r} = \mbox{tr}(X^{\top}M_{\mathrm{exc}}(X)X),
\end{align*}
where the matrix $M_{\mathrm{exc}}(X)$ in the last formula is defined
as
$(M_{\mathrm{exc}})_{i,j}=\int_{\Omega}\varepsilon_{\mathrm{xc}}\mathcal{B}_{i,j}
d\bm{r}$. Thus the discretized form of the minimization problem
\eqref{prob:KS-continuous} is assembled as
\begin{equation}\label{prob:KS}
  \begin{array}{cl}
    \min\limits_{X\in\Rnp}&E_{\mathrm{KS}}(X) = E_{\mathrm{kinetic}}(X)+ E_{\mathrm{ext}}(X)
    + E_{\mathrm{Har}}(X) 
    + E_{\mathrm{xc}}(X) +  E_{\mathrm{nuc}}
                            \\
    \st &  X^{\top}BX = I_p.
  \end{array}
\end{equation}
The generalized orthogonality constraints in \eqref{prob:KS} are known as the generalized Stiefel manifold \cite{absil2009optimization}, denoted by $\stiefel^B:=\{X\in \Rnp: X\zz B X = I_p\}$. Note that the gradient of $\Et(X)$ satisfies $\nabla \Et(X)=2H(X)X$, while we scale it as $\nabla \Et(X) = H(X)X$ to be consistent with the convention.

\section{Parallelizable Algorithms}\label{sec:alg}
In this section, we concentrate on the solving part in \cref{fi:frameKS}. Namely, the discretized total energy minimization problem \eqref{prob:KS} is considered. We first state its optimality condition. Then a one step gradient-descent update is proposed for solving \eqref{prob:KS} and its global convergence result is established. We also develop an upgraded algorithm based on the column-wise block minimization with preconditioning. 

The discretized total energy minimization problem \eqref{prob:KS} is a nonconvex constrained optimization problem due to the orthogonality constraints. We state its first-order optimality condition as follows.
\begin{definition}\label{def:1}
	Given $X\in\Rnp$, we call $X$ a first-order stationary point of \eqref{prob:KS} if the following condition
	\begin{equation} \label{eq:FONS}
	\left\{
	\begin{array}{ccc}
	\tr(Z\zz \nabla \Et(X))&\geq& 0,\\
	X\zz BX &=& I_p
	\end{array}
	\right.
	\end{equation}
	holds for any $Z\in\TX$, where $\TX:=\{Z\in\Rnp: Z\zz BX+X\zz BZ=0 \}$ is the tangent space of $\stiefel^B$ at $X$.
\end{definition}

Following from \cite[Lemma 2.2]{gao2018new}, it can be proved that the condition \eqref{eq:FONS} is equivalent~to
\begin{equation} \label{eq:FON}
\left\{
\begin{array}{ccc}
(I_n-BXX\zz) \nabla \Et(X) &= & 0, \\ 
X\zz \nabla \Et(X) & = & \nabla \Et(X)\zz X, \\ 
X\zz BX &=& I_p. 
\end{array}
\right.
\end{equation}
In fact, the second equality of \eqref{eq:FON} is automatically satisfied since $\nabla \Et(X)=H(X)X$ and the Hamiltonian $H(X)$ is symmetric.   Moreover, the condition \eqref{eq:FON} can be further reformulated as
\begin{equation}\label{eq:kkt}
\left\{
\begin{array}{l}
\nabla \Et(X) = BX\Lambda,\\ 
X\zz BX = I_p,
\end{array}
\right.
\end{equation}
where the symmetric matrix $\Lambda \in \Sp$ can be regarded as
the Lagrangian multipliers of the generalized orthogonality
constraints. Multiplying the first equation from the left by $X\zz $,
it follows that $\Lambda$ reads the closed-form expression at any
first-order stationary point,
\begin{equation}\label{eq:closed-multi}
\Lambda = X\zz\nabla \Et(X) = X\zz H(X) X.
\end{equation}

\subsection{Main iteration: one step gradient-descent update}\label{subsec:MALM}
The infeasible method proposed in \cite{gao2018parallelizable} has
been proven to be efficient for solving the lager scale orthogonality constrained optimization problems. Briefly, the iterates are not required to be orthogonal. Meanwhile, the feasibility violation gradually decreases to zero until the method converges. This type of methods enables us 
to get rid of the unscalable computation for preserving constraints. 
In addition, it provides an opportunity to employ the multi-core machines and thus gain more scalability from parallel computing.

The algorithm in \cite{gao2018parallelizable} originally aims to solve the problem with orthogonality constraints ($X\zz X=I$), and
in this subsection, we extend it to the general case ($X\zz BX=I$) which is
not a trivial task.
The skeleton of this algorithm is based on the augmented Lagrangian method (ALM) \cite{nocedal2006numerical}. Let $X^k$ be the current iterate, the classical ALM has two major steps in each iteration:
\begin{itemize}
	\item[1)]  Update the Lagrangian multipliers $\Lambda^k$;
	\item[2)] Minimize the ALM subproblem to obtain $X^{k+1}$, 
	\begin{equation}\label{prob:ALM-sub}
	\min\limits_{X\in\Rnp} {\cL}_\beta(X,\Lambda^k):=\Et(X) - \frac{1}{2}\jkh{\Lambda^k, 
		X\zz BX-I_p} + \frac{\beta}{4} \norm{X\zz BX-I_p}\fs,
	\end{equation}
	where $\cLb(X,\Lambda^k)$ defines the augmented Lagrangian function of problem \eqref{prob:KS} and $\beta>0$ is the penalty parameter.
\end{itemize}  
This framework avoids being confronted with the generalized orthogonality constraints. Next, we discuss how to update these two steps efficiently. 

For step 1), in view of the fact \eqref{eq:closed-multi}, we suggest the following update of Lagrangian multipliers
\begin{equation}\label{eq:multi}
\Lambda^k = {X^k}\zz H(X^k) X^k.
\end{equation}
Due to the symmetry of the Hamiltonian $H(X^k)$, the above update provides symmetric multipliers $\Lambda^k$, which allows us to waive the symmetrization step,  $\mathrm{sym}({X^k}\zz H(X^k) X^k)$, in~\cite{gao2018parallelizable}.

On the other side, the ALM subproblem in step 2) is an unconstrained optimization problem, and various methods can be applied to derive different updates. Instead of solving the subproblem to a certain preset precision, our strategy is to provide an approximate solution by an explicit formulation. We first introduce a proximal linearized approximation~\cite{bolte2014proximal} to substitute the augmented Lagrangian function in \eqref{prob:ALM-sub}. Specifically, we consider the~subproblem
\begin{equation}\label{eq:PLLag}
\min\limits_{X\in\Rnp} \, 
\jkh{\nabla_X \cLb(X^k,\Lambda^k), X-X^k} 
+ \frac{\eta_k}{2}
\norm{X-X^k}\fs.
\end{equation}
The parameter $\eta_k$ measures the dominance of the proximal term. The solution of this quadratic subproblem reads an explicit form 
\begin{align} 
X^{k+1} &= X^k - \frac{1}{\eta_k}\nabla_X \cLb(X^k,\Lambda^k)\nonumber\\
&= X^k - \frac{1}{\eta_k} \dkh{H(X^k)X^k -  BX^k {X^k}\zz H(X^k) X^k
	+ \beta BX^k({X^k}\zz BX^k - I_p)}, \label{eq:PLAM-main}
\end{align}
where the last step is owing to the update formula \eqref{eq:multi}. It implies that this modified ALM update is nothing but a vanilla gradient-descent step and $1/\eta_k$ specifies the stepsize.  

Now we turn back to the solving part in \cref{fi:frameKS}. By using the one step gradient-descent update \eqref{eq:PLAM-main} in the main iteration, we fulfill a solving part for KSDFT. The complete algorithm is described in \crefalg{alg:PLAM}. 

\begin{algorithm2e}[ht]
	\caption{Proximal Linearized Augmented Lagrangian Algorithm (PLAM)}
	\label{alg:PLAM}
	\SetKwInOut{Input}{input}\SetKwInOut{Output}{output}
	\SetKwComment{Comment}{}{}
	\BlankLine 
	\textbf{Input:} discretization with $n\in\mathbb{N}$ and $B\in\mathbb{SR}^{n\times n}$; tolerance $\epsilon>0$; initial guess $X^0\in\Rnp$; Set $k:=0$.\\
	\While{$\norm{(I_n-BX^k{X^k}\zz)H(X^k)X^k}\ff + \norm{{X^k}\zz BX^k-I}\ff > \epsilon$\label{al:stepW}}
	{
		Compute the  Hamiltonian $H(X^k)$ by \eqref{eq:hamiltonian}.\label{al:stepH}
		
		Update the variable $X^{k+1}$ by \eqref{eq:PLAM-main}.\label{al:stepX}
		
		Update the parameters $\eta_k$ and $\beta$; Set $k:=k+1$. \label{al:stepP}
	
	}
	
	
	\textbf{Output:} $X^k$.
\end{algorithm2e}

Once the pre-processing and discretization are finished, the number of degrees of freedom $n$ and the matrix $B$ are fixed. Meanwhile, the initial guess $X^0$ can be generated by any popular strategy in KSDFT. In view of the condition~\eqref{eq:FON}, we notice that Line \ref{al:stepW} (the stopping criteria) in \crefalg{alg:PLAM} is sufficient to check the first-order optimality. Line \ref{al:stepH}-\ref{al:stepP} are the main iterations in \cref{fi:frameKS}. Indeed, those calculations in KSDFT can be well assembled in a parallel way. The gradient-descent update in Line \ref{al:stepX} is the BLAS3 operation. The choices of parameters will be discussed in \cref{subsec:parameters}. To sum up, the algorithm PLAM can be conveniently  implemented since there is no matrix decomposition or eigen-solver. It completely consists of BLAS operations. Therefore, the algorithm PLAM is open to be parallelized. Note that SCF method can also be described by the framework \cref{fi:frameKS}, and the only distinction between SCF and PLAM is the main iteration. Specifically, SCF replaces Line~\ref{al:stepH}-\ref{al:stepP} with solving a linear eigenvalue problem from \eqref{eq:KS-gep}. By contrast, PLAM just carries out a one step gradient-descent update.

\subsection{Convergence analysis}
The global convergence of the plain PLAM for orthogonality constraints ($X\zz X=I$) has been studied in~\cite{gao2018parallelizable}. Next, we consider the generalized case, i.e., $X\zz BX=I$. It can be proved that the existing results are still applicable for \crefalg{alg:PLAM}. 

A natural idea to investigate the generalized orthogonality constraints is transforming it into the standard case. Since $B$ is symmetric positive definite, there exists a symmetric positive definite matrix $G\in\Rnn$
satisfying $B=G^2$. By taking $Y=GX$, the problem~\eqref{prob:KS} is equivalent to
\begin{equation}\label{prob:KS-transform-Y}
\begin{array}{cc}
\min\limits_{Y\in\Rnp}&g(Y):=\Et(G\inv Y)\\
\mbox{s.t.} & Y\zz Y=I_p.
\end{array}
\end{equation}
Thus the augmented Lagrangian function of \eqref{prob:KS-transform-Y} is defined as
\begin{equation*}\label{eq:Lag-Y}
\tilde{\cLb}(Y,\tilde{\Lambda}) = g(Y) - \frac{1}{2}\langle\tilde{\Lambda}, 
Y\zz Y-I_p\rangle + \frac{\beta}{4} ||Y\zz  Y-I_p||\fs.
\end{equation*}

The next lemma shows that the transform $Y=GX$ does not change the stationary points of problems.
\begin{lemma}\label{lemma:XYeqv}
	(i) $X^*$ is a first-order stationary point of the problem \eqref{prob:KS} if and only if $Y^*=GX^*$ is also a first-order  stationary point of the problem \eqref{prob:KS-transform-Y}.
	
	(ii) $X^*$ is a first-order stationary point of the ALM subprblem $\min_{X\in\Rnp} \cLb(X,\Lambda^*)$
	with $\Lambda^* = \mathrm{sym}(\nabla \Et(X^*)\zz X^*)$ if and only if $Y^*=GX^*$ is also a first-order stationary point of the ALM subprblem $\min_{Y\in\Rnp} \tilde{\cLb}({Y},{\tilde{\Lambda}^*})$ 	with $\tilde{\Lambda}^* = \mathrm{sym}(\nabla g(Y^*)\zz Y^*)$.

\end{lemma}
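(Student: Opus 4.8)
The plan is to exploit the fact that the whole lemma reduces to the change of variables $Y = GX$ together with the chain-rule identity it induces on the gradients. Since $B = G^2$ with $G = G\zz$ symmetric positive definite (hence invertible), differentiating $g(Y) = \Et(G\inv Y)$ yields the single identity that drives everything,
\[
\nabla g(Y) = G\inv \nabla \Et(G\inv Y), \qquad \text{equivalently} \qquad \nabla \Et(X) = G\,\nabla g(GX).
\]
I would first record this identity (a one-line differential computation that uses $G = G\zz$), together with the feasibility equivalence $X\zz BX = I_p \Leftrightarrow (GX)\zz(GX) = Y\zz Y = I_p$, which is immediate from $G\zz G = G^2 = B$.

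For part (i), I would characterize stationarity of \eqref{prob:KS} through the KKT form \eqref{eq:kkt}, namely $\nabla \Et(X^*) = BX^*\Lambda$ with $\Lambda \in \Sp$ and $(X^*)\zz B X^* = I_p$; the corresponding condition for the standard-constraint problem \eqref{prob:KS-transform-Y} is the same reduction applied with $B = I_n$, i.e. $\nabla g(Y^*) = Y^*\tilde\Lambda$ with $\tilde\Lambda \in \Sp$ and $(Y^*)\zz Y^* = I_p$. The forward direction is then a substitution: from $\nabla \Et(X^*) = BX^*\Lambda$ I obtain $\nabla g(Y^*) = G\inv\nabla\Et(X^*) = G\inv B X^*\Lambda = G X^*\Lambda = Y^*\Lambda$, so $Y^*$ is stationary with the same symmetric multiplier $\tilde\Lambda = \Lambda$; the reverse direction is identical with $G$ and $G\inv$ interchanged, and feasibility transfers both ways by the equivalence noted above.

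For part (ii), I would work directly with the gradients of the two augmented Lagrangians. Differentiating gives $\nabla_X \cLb(X,\Lambda) = \nabla \Et(X) - BX\Lambda + \beta BX(X\zz BX - I_p)$ (consistent with \eqref{eq:PLAM-main}) and $\nabla_Y \tilde{\cLb}(Y,\tilde\Lambda) = \nabla g(Y) - Y\tilde\Lambda + \beta Y(Y\zz Y - I_p)$. The first key step is to check that the point-dependent multipliers coincide: using the chain rule and $G\inv G = I_n$, one has $\nabla g(Y^*)\zz Y^* = \nabla \Et(X^*)\zz G\inv G X^* = \nabla \Et(X^*)\zz X^*$, so $\tilde\Lambda^* = \mathrm{sym}(\nabla g(Y^*)\zz Y^*) = \mathrm{sym}(\nabla \Et(X^*)\zz X^*) = \Lambda^*$. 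The second key step is to substitute $Y^* = GX^*$, $\nabla g(Y^*) = G\inv \nabla \Et(X^*)$, $\tilde\Lambda^* = \Lambda^*$ and $(Y^*)\zz Y^* = (X^*)\zz B X^*$ into $\nabla_Y \tilde{\cLb}$ and multiply on the left by $G$; using $G^2 = B$ this collapses exactly to $\nabla_X \cLb(X^*,\Lambda^*)$, giving
\[
G\,\nabla_Y \tilde{\cLb}(Y^*,\tilde\Lambda^*) = \nabla_X \cLb(X^*,\Lambda^*).
\]
Since $G$ is invertible, one gradient vanishes if and only if the other does, which is the claimed equivalence.

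The computations here are routine; the only points that need care are getting the chain-rule identity onto the correct side (this relies on $G = G\zz$), differentiating the Lagrangian and penalty terms of $\cLb$ without dropping the factor that arises from $\tr(\Lambda X\zz BX)$, and — the one genuinely structural observation — recognizing that the two subproblem multipliers $\Lambda^*$ and $\tilde\Lambda^*$ are equal, so that the two augmented Lagrangian gradients differ merely by the invertible left factor $G$. I do not anticipate a real obstacle beyond this bookkeeping.
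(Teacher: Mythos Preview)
Your proposal is correct and follows essentially the same route as the paper's proof: you invoke the chain-rule identity $\nabla g(Y)=G\inv\nabla\Et(X)$, verify $\tilde\Lambda^*=\Lambda^*$, and obtain the relation $\nabla_Y\tilde\cLb(Y^*,\tilde\Lambda^*)=G\inv\nabla_X\cLb(X^*,\Lambda^*)$ (which you write equivalently as $G\,\nabla_Y\tilde\cLb=\nabla_X\cLb$). The only cosmetic difference is that for part (i) you use the KKT form \eqref{eq:kkt} while the paper uses the equivalent projected-gradient form \eqref{eq:FON}; the paper itself notes these are equivalent, so this is not a substantive deviation.
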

\begin{proof}
	(i) Let $Y^*=GX^*$, it can be verified that
	\begin{align*}
	(I_n-Y^*{Y^*}\zz) \nabla g(Y^*) &=   G\inv (I_n-BX^*{X^*}\zz )\nabla \Et(X^*),\\   
	\nabla g(Y^*)\zz Y^* &= \nabla \Et({X^*})\zz {X^*}, \\\nonumber 
	{Y^*}\zz  Y^* -I_p &= {X^*}\zz B{X^*}-I_p. 
	\end{align*}
	Together with \eqref{eq:FON}, we can conclude that problems \eqref{prob:KS} and \eqref{prob:KS-transform-Y} share the same first-order stationary points.

	(ii) Let $Y^*=GX^*$. Similarly, it can be verified that 
	\begin{align*}
	{\tilde{\Lambda}^*} = {\mathrm{sym}(\nabla g(Y^*)\zz Y^*)} &= {\mathrm{sym}(\nabla \Et({X^*})\zz {X^*})} = {\Lambda}^*,\\
	\nabla_Y \tilde{\cLb}({Y^*},{\tilde{\Lambda}^*}) &= G\inv \nabla_X \cLb(X^*,{\Lambda^*}).
	\end{align*}
	These equalities lead to the desired equivalence.
\end{proof}

In view of \cref{lemma:XYeqv} and let $Y=GX$, the algorithm for problem \eqref{prob:KS} can be translated into an adaptation for \eqref{prob:KS-transform-Y}. Next, we consider using PLAM to solve the problem~\eqref{prob:KS-transform-Y}. Recall that there are two major steps in the construction of PLAM: 
\begin{itemize}
	\item[1)]  For the multiplier update, we continue with the explicit update \eqref{eq:multi}, i.e.,  
	\begin{equation*}
		\tilde{\Lambda}^k=\mathrm{sym}(\nabla g(Y^k)\zz Y^k).
	\end{equation*}
	\item[2)]  We construct the subproblem with respect to $Y$,
	\begin{equation}\label{eq:PLLag-Y}
	\min\limits_{Y\in\Rnp} \, 
	\jkh{\nabla_Y \tilde{\cLb}(Y,\tilde{\Lambda}^k), Y-Y^k}_B 
	+ \frac{\eta_k}{2}
	\norm{Y-Y^k}\fs.
	\end{equation}
	where the inner product is defined as $\jkh{Y,\bar{Y}}_B:=\tr(Y\zz B\bar{Y})$. 
\end{itemize}

Indeed, this subproblem has the closed-form solution
\begin{align}  
Y^{k+1} &= Y^k - \frac{1}{\eta_k} B\,\nabla_Y \tilde{\cLb}(Y,\tilde{\Lambda}^k) \nonumber\\
&= Y^k - \frac{1}{\eta_k}B\left(\nabla g(Y^k) -  Y^k \varPsi(\nabla g(Y^k)\zz Y^k)+ \beta Y^k({Y^k}\zz Y^k - I_p)\right). \label{eq:PLAM-main-Y} 
\end{align}
Using $Y=GX$ and the expression of $\tilde{\Lambda}^k$, it follows that the $X$-update \eqref{eq:PLAM-main}  can be exactly recovered from~\eqref{eq:PLAM-main-Y}. In other words, the algorithm PLAM for the $X$-problem \eqref{prob:KS} is proved to be equivalent to its adaptation for the $Y$-problem \eqref{prob:KS-transform-Y}. Whereas the proximal linearized approximation in \eqref{eq:PLLag-Y} differs from what we used in \cite{gao2018parallelizable}, the sketch of the convergence analysis is nearly the same. Therefore, the convergence results for PLAM can be accordingly migrated from \cite{gao2018parallelizable}.

Finally, we present the global convergence of PLAM without proofs. Interested readers are referred to \cite{gao2018parallelizable} for a comprehensive understanding, such as the worst case complexity and local convergence rate. 
\begin{assumption}\label{assump:energy function}
	$\Et(X)$ is twice differentiable. 
\end{assumption}

\begin{assumption}\label{a2}
	For a given $X^0\in\Rnp$, we say it is a qualified initial guess, if 
	there exists $\us\in(0,1)$ such that 
	\begin{equation*}
	\sigma_{\min}(X^0)\geq \us,\qquad 0<||{X^0}\zz B X^0 -I_p||\ff \leq 1-\us^2.
	\end{equation*}
\end{assumption}
\begin{theorem}\label{thm:PLAM}
	Let $\{X^k\}$ be the iterate sequence 
	generated by \crefalg{alg:PLAM}  initialized from $X^0$ satisfying \cref{assump:energy function} and \cref{a2}. Suppose that
	the parameters $\beta$ and $\eta_k$ ($k=1,\dots$) are sufficiently large, and
	in particular, the sequence $\{\eta_k\}$ is upper bounded.
	Then the sequence $\{X^k\}$ has at least one cluster point, and any which is a first-order stationary point of problem \eqref{prob:KS}.
\end{theorem}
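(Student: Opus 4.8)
The plan is to exploit the reduction already assembled above. Via $Y=GX$ with the symmetric positive definite $G=B^{1/2}$, \cref{lemma:XYeqv} identifies the first-order stationary points of \eqref{prob:KS} with those of the standard-manifold problem \eqref{prob:KS-transform-Y}, and the $X$-update \eqref{eq:PLAM-main} coincides exactly with the $Y$-update \eqref{eq:PLAM-main-Y}. Since $G$ is a fixed invertible matrix, boundedness and the cluster points of $\{X^k\}$ and $\{Y^k=GX^k\}$ correspond to one another. It therefore suffices to prove that the $Y$-iteration converges to a first-order stationary point of \eqref{prob:KS-transform-Y} and to transfer the conclusion back through \cref{lemma:XYeqv}. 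Because the proximal term in \eqref{eq:PLLag-Y} is measured in the $B$-inner product rather than the Euclidean one of \cite{gao2018parallelizable}, I would not cite that analysis verbatim but re-run its skeleton, checking that every estimate survives the change of metric.

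The argument rests on three pillars. \emph{(a) Invariance of a safe region:} starting from \cref{a2}, I would prove by induction on $k$ that the iterates remain in a compact set on which $\sigma_{\min}(X^k)$ is bounded away from zero and $\norm{{X^k}\zz BX^k-I_p}\ff$ is uniformly bounded. The inductive step is precisely where $\eta_k$ large (so that the gradient step \eqref{eq:PLAM-main-Y} is a small controlled perturbation of $Y^k$) and $\beta$ large (so the penalty pulls the feasibility residual inward instead of letting it grow) are both invoked. \emph{(b) Sufficient descent:} using the closed-form multiplier $\tilde{\Lambda}^k=\mathrm{sym}(\nabla g(Y^k)\zz Y^k)$, I would view $\tilde{\cLb}$ with its multiplier slaved to $Y$ as a merit function of $Y$ alone and establish a descent inequality $\tilde{\cLb}(Y^{k+1},\tilde{\Lambda}^{k+1})\le\tilde{\cLb}(Y^k,\tilde{\Lambda}^k)-c\,\norm{Y^{k+1}-Y^k}\fs$ for some $c>0$; the assumed upper bound on $\{\eta_k\}$ gives a uniform stepsize that dominates the Lipschitz constant of $\nabla_Y\tilde{\cLb}$ over the compact region from (a). \emph{(c) Lower boundedness:} by \cref{assump:energy function} the merit function is continuous, hence bounded below on that compact region.

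Combining (b) and (c) yields $\sum_k\norm{Y^{k+1}-Y^k}\fs<\infty$, so $\norm{Y^{k+1}-Y^k}\ff\to0$; together with the boundedness from (a) and Bolzano--Weierstrass this produces at least one cluster point $Y^*$. Passing to the limit of \eqref{eq:PLAM-main-Y} along a convergent subsequence, the vanishing increment and invertibility of $B$ force $\nabla_Y\tilde{\cLb}(Y^*,\tilde{\Lambda}^*)=0$; multiplying this identity on the left by ${Y^*}\zz$ and using $\beta$ sufficiently large then forces the feasibility residual ${Y^*}\zz Y^*-I_p=0$, so that $Y^*$ solves the optimality system analogous to \eqref{eq:kkt}. \cref{lemma:XYeqv} transfers first-order stationarity back to $X^*=G\inv Y^*$ for \eqref{prob:KS}, and since the subsequence was arbitrary the same holds for every cluster point.

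I expect the feasibility conclusion at the cluster point, namely ${Y^*}\zz Y^*=I_p$, to be the main obstacle: the iterates are deliberately infeasible, so exact feasibility can only emerge in the limit, and recovering it requires a careful spectral estimate tying the threshold on $\beta$ to a uniform bound on ${X}\zz H(X)X$ across the safe region. A secondary, genuinely new difficulty relative to \cite{gao2018parallelizable} is the metric change itself: the factors $G\inv$ in $\nabla g(Y)=G\inv\nabla\Et(G\inv Y)$ and the $B$-inner product in \eqref{eq:PLLag-Y} inflate the relevant Lipschitz and spectral constants, so I would track $\norm{G\inv}$ and the conditioning of $B$ explicitly to confirm that the thresholds on $\beta$ and $\eta_k$ used in (a) and (b) stay finite.
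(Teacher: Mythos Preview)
Your proposal is correct and follows essentially the same route as the paper: reduce to the standard-constraint problem \eqref{prob:KS-transform-Y} via $Y=GX$ and \cref{lemma:XYeqv}, then carry over the analysis of \cite{gao2018parallelizable} with the acknowledged adaptation for the $B$-inner product in the proximal term of \eqref{eq:PLLag-Y}. The paper in fact presents the theorem \emph{without} proof, deferring entirely to \cite{gao2018parallelizable} after remarking that ``the sketch of the convergence analysis is nearly the same''; your three-pillar outline (safe-region invariance, sufficient descent, lower boundedness) is simply an explicit rendering of what that deferred argument contains, and your flagged obstacles---feasibility at the cluster point and the constants inflated by $G^{-1}$---are precisely the places where the migration is not verbatim.
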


\subsection{An upgraded version of PLAM}
According to the numerical reports in \cite{gao2018parallelizable}, the plain PLAM performs well in most problems, whereas its behavior  is sensitive to the parameters $\beta$ and $\eta_k$. In practice, it is  always troublesome 
to tune these parameters as
PLAM performs identically on different problems. Even worse, we cannot guarantee the boundedness of iterate sequences without restrictions on parameters. 

Consequently, \cite{gao2018parallelizable} suggests a column-wise block minimization for PLAM to overcome these limitations. In light of its motivation, we similarly impose the redundant column-wise constraints on the subproblem \eqref{eq:PLLag},
 and obtain the following subproblem.
	\begin{equation}\label{eq:PCLag}
		\begin{array}{rcl}
			\min\limits_{X\in\Rnp} && 
			\jkh{\nabla_X \cLb(X^k,\Lambda^k), X-X^k} 
			+ \frac{\eta_k}{2}
			\norm{X-X^k}\fs,\\
			\st && \Diag (X\zz B X) =I.
		\end{array}
	\end{equation}	
Notice that the 
 subproblem \eqref{eq:PCLag}
is column-wisely separable. Thus, for the $i$-th column  ($i=1,\dots,p$), we can construct a subproblem with an extra constraint as follows,
\begin{equation}\label{eq:PLLagSph}
	\begin{array}{rcl}
		\min\limits_{x\in\Rn} && 
		\nabla_{X_i} \cLb(X^k,\Lambda^k)\zz (x-X_i^k) + \frac{\eta_k}{2}
		||x-X_i^k||_2^2,\\
		\st && x\zz Bx = 1,
	\end{array}
\end{equation}
where $X_i$ denotes the $i$-th column of $X$. The redundant constraint is for restricting the iterate sequence to a compact set and hence make it bounded. The subproblem \eqref{eq:PLLagSph} has the closed-form solution
\begin{equation}\label{eq:PCAL-main}
	X_i^{k+1} = \frac{X_i^k -\frac{1}{\eta_k}\nabla_{X_i} \cLb(X^k,\Lambda^k) }{{\norm{X_i^k -\frac{1}{\eta_k}\nabla_{X_i} \cLb(X^k,\Lambda^k)}_B}},
\end{equation}	
where $\norm{x}_B:=\sqrt{x\zz Bx}$ is a norm for any symmetric positive definite matrix $B$. Accordingly, the Lagrangian  multipliers of $X^k$ can be developed based on the new  subproblem~\eqref{eq:PCLag}. In view of these formulations, an upgraded version of PLAM is listed in \crefalg{alg:PCAL} called PCAL.

\begin{algorithm2e}[ht]
	\caption{Parallelizable Column-wise Block Minimization for PLAM (PCAL)}
	\label{alg:PCAL}
	\SetKwInOut{Input}{input}\SetKwInOut{Output}{output}
	\SetKwComment{Comment}{}{}
	\BlankLine 
	\textbf{Input:} triangulation with $n\in\mathbb{N}$ and $B\in\mathbb{SR}^{n\times n}$; tolerance $\epsilon>0$; initial guess $X^0\in\stiefel^B$; Set $k:=0$.\\
	\While{$\norm{(I_n-BX^k{X^k}\zz)H(X^k)X^k}\ff + \norm{{X^k}\zz BX^k-I}\ff > \epsilon$}
	{
		Compute the  Hamiltonian $H(X^k)$ by \eqref{eq:hamiltonian}.
		
		Compute the Lagrangian multipliers by
		\begin{equation}\label{eq:lambda2}
		\Lambda^k :=  {X^k}\zz H(X^k)\zz X^k + \Theta\left({X^k}\zz 
		\nabla_X L_\beta (X^k, {X^k}\zz H(X^k)\zz X^k)
		\right).
		\end{equation}
		
		\For{$i = 1,\dots,p$}{ 
			Update $X_i^{k+1}$ by \eqref{eq:PCAL-main}.
		}
		
		Update $X^{k+1}=[X_1^{k+1},\dots,X_p^{k+1}]$.
		
		Update the parameters $\eta_k$ and $\beta$; Set $k:=k+1$.
	}
	\textbf{Output:} $X^k$.
\end{algorithm2e}

Note that the update \eqref{eq:lambda2} for Lagrangian multipliers in PCAL is different from \eqref{eq:multi} in PLAM. When the redundant constraints, $\norm{X_i}_B=1$  ($i=1,\dots,p$), are imposed, the corresponding optimality condition changes simultaneously. Specifically, the problem \eqref{prob:KS} with redundant constraints has the first-order optimality condition as follows,
\begin{equation}\label{eq:kkt-PCAL}
\left\{
\begin{array}{l}
\nabla \Et(X) = BX\Lambda+BXD,\\ 
X\zz BX = I_p.
\end{array}
\right.
\end{equation}
The matrix $D\in\R^{p\times p}$ is diagonal and denotes the multipliers for extra constraints.  Following a similar derivation of \eqref{eq:kkt}, it can be verified  that $\Lambda$ in \eqref{eq:kkt-PCAL} achieves the closed-form expression \eqref{eq:lambda2} at any first-order stationary point. Notice that the main calculation of PCAL is a sequence of gradient-descent step with normalization. These for-loop computations are independent and hence can be executed in a parallel fashion. To sum up, the upgraded version of PLAM still enjoys the benefit of parallel computing. 

In scientific computing, preconditioning is typically used to accelerate iterative algorithms. In~\cite{bao2012h}, a preconditioner for the eigenvalue problem of SCF iteration  has been proposed. It has the form of $T=\frac{1}{2}L-\lambda B$, where $\frac{1}{2}L$ is the discretized kinetic operator defined in \eqref{prob:KS} and $\lambda$ is an approximated eigenvalue. Since $\frac{1}{2}L$ dominates the Hamiltonian, this preconditioner usually performs well in practical calculations. In view of the optimality condition \eqref{eq:kkt}, the update of Lagrangian multipliers~\eqref{eq:multi} in PLAM can be viewed as the approximation of the eigenvalues. Thus, we choose $\Lambda^k_{ii}=\dkh{{X^k}\zz H(X^k) X^k}_{ii}$ to construct a preconditioner for the proposed algorithm:
\begin{equation}\label{eq:precondition-T} 
T^k_{(i)}=\left\{\begin{array}{lc}
\frac{1}{2}L-\Lambda^k_{ii}B, & \mbox{if}~ \Lambda^k_{ii} < 0,\\
I, & \mbox{otherwise},
\end{array}\right. \quad \mbox{for~} i=1,\dots,p.
\end{equation}
Consequently, the one step gradient-descent update \eqref{eq:PLAM-main} in PLAM is preconditioned as
\begin{equation*}\label{eq:precondition-X-update}
X_i^{k+1} = X_i^k - \frac{1}{\eta_k} \dkh{T^k_{(i)}}\inv  \nabla_{X_i}\cLb(X^k,{\Lambda^k}),\quad \mbox{for~} i=1,\dots,p,
\end{equation*}
where the preconditioned gradient can be assembled by solving 
	$p$ linear systems.
Note that PCAL is compatible with this type of preconditioning providing that $\Lambda^k$ is selected from~\eqref{eq:lambda2}. The parallelizable structure of PLAM and PCAL is still maintained as the preconditioning	is conducted column-wisely. A test in \cref{fi:vs} verifies the effectiveness of the preconditioner \eqref{eq:precondition-T} for both algorithms, where substationarity is computed by Line~\ref{al:stepW} in~\crefalg{alg:PLAM}.

\begin{figure}[htbp]
	\centering
	\subfigure[PLAM for He, $\beta=15$]
	{\includegraphics[width=.45\textwidth]{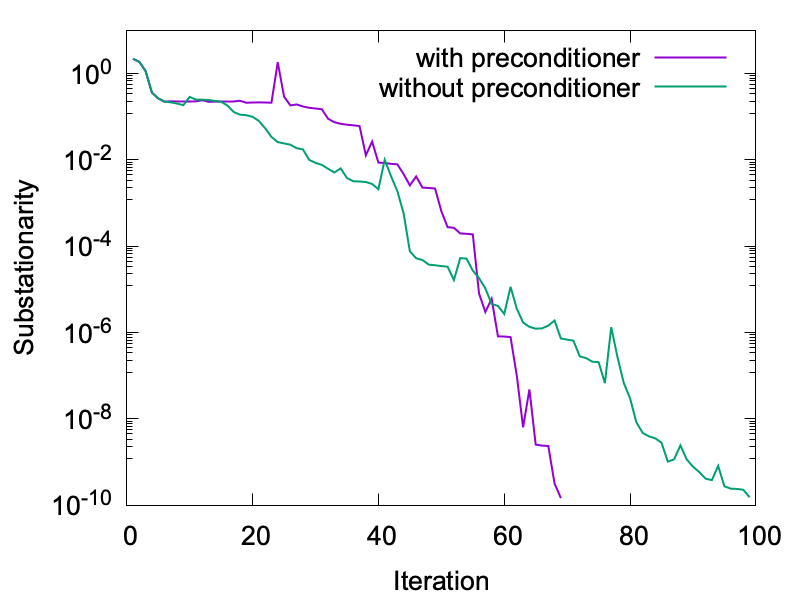}}
	\subfigure[PCAL for He, $\beta=1$]
	{\includegraphics[width=.45\textwidth]{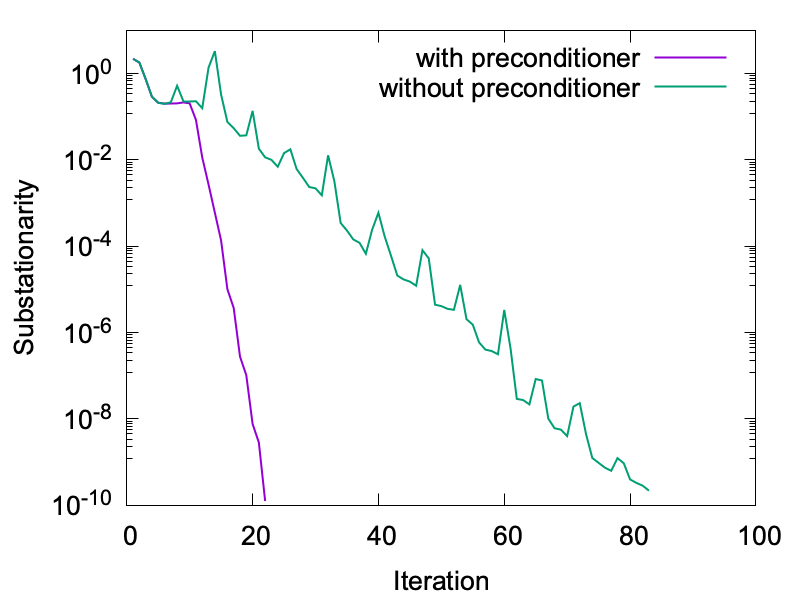}} \qquad
        \caption{The performance of the preconditioner
          \eqref{eq:precondition-T} for a helium (He) atom example
          with $n=1606, p=1$. \label{fi:vs}}
\end{figure}

\section{Implementation Details}\label{sec:imp}
In this section, we introduce the implementation details of the framework (\cref{fi:frameKS}) in solving the ground state. The quantum systems examined in this paper are introduced. In addition, several numerical issues in the simulations are discussed. In view of \cref{fi:vs}, we observe that PCAL behaves more efficient and robust than PLAM, and thus we focus on PCAL in the following tests.

All the simulations are performed on a workstation with two Intel(R) Xeon(R) Processors Silver 4110 (at 2.10GHz$\times 8$, 12M Cache) and 384GB of RAM, and the total number of cores is 16. The software is the C++ library AFEABIC \cite{bao2012h} under Ubuntu 18.10.

\subsection{Testing problems}
A number of atom and molecules are simulated to illustrate the effectiveness and high scalability of the presented algorithm. It is noted that in practical simulations, $p$ is regarded as the number of orbitals and each orbital is occupied by two electrons.  
The scale of testing systems, i.e., $p$, is ranging
from 1 to 1152. In the formulation
of problem \eqref{prob:KS-continuous}, the exchange-correlation potential $V_{\mathrm{xc}}$ and
exchange-correlation potential energy $\epsilon_{\mathrm{xc}}$ per
particle are obtained from the package Libxc
\cite{marques2012libxc}. The model equations for the various numerical
examples are only different in the external potential term
$V_{\mathrm{ext}}$
and precisely in the charge numbers and positions of the nuclei. The
charge of a certain nucleus used in numerical experiments is
listed in \cref{tab:charge}. The nuclei positions for small molecules are
obtained from the calculated geometry part in CCCBDB
\cite{nist2016cccbdb} and for carbon nanotubes are from \cite{frey2011nanotube}. In summary, the following electronic structures He ($1$), LiH (2), CH$_4$ (5), H$_{2}$O~(5), BF$_3$~(16), C$_6$H$_6$ (21), C$_{12}$H$_{10}$N$_2$~(48),
C$_{60}$ (180), and carbon nanotubes C$_{96}$ (288), C$_{192}$~(576) and C$_{384}$~(1152)
are tested, where the number in the bracket
stands for the number of orbitals $p$ in the associated system.

\begin{table}[htbp]
	\small
	\centering
	\caption{Charge number $Z_j$ of the nucleus. \label{tab:charge}}
	\begin{tabular}{ccccccccc}
		\toprule
		\multicolumn{1}{c}{}&H&He&Li&B&C&N&O&F\\ \midrule
		$Z_j$ &1&2 &3 &5&6&7&8&9 \\\bottomrule
	\end{tabular}
\end{table} 

In practice, we evaluate the values for substationarity, feasibility violation and the total energy of each example during the simulations. Specifically, $kkt=\norm{H(X)X-BX\Lambda}\ff$, $fea= \norm{X\zz B X-I}\ff$, and the total energy $\Et$ is computed from \eqref{eq:totalEnergy}. When the summation of $kkt$ and $fea$ is small enough, i.e., the following stopping criterion
\begin{equation*}
\frac{kkt+fea}{kkt_0} < tol
\end{equation*}
is satisfied, we terminate the algorithm. Here, $kkt_0$ is the initial  substationarity and $tol$ denotes the tolerance and is chosen to be
$1.0\times10^{-8}$ in our simulations.

\subsection{Pre-processing and discretization: mesh and initial guess generation}\label{sec:radialmesh}
Once we determine the computational domain, a space discretization is generated for the ground state calculation. To resolve the singularities in the external potential term, a non-uniform mesh for the partition of the computational domain is introduced to obtain high accuracy with least effort. Specifically, a global mesh size function based on the external potential is adopted to generate the
nonuniform mesh \cite{kuang2019on}. Within the linear finite element framework, to capture the $1/r$ decay in the external potential, the mesh size function locally behaves as $r^{6/5}$ for small $r$ can be derived, where $r$ represents the distance to the nucleus. 
Then we can construct the mesh size function $h(\bm{r})$ at the discretized point $\bm{r}$ as in \cite{kuang2019on}:
\begin{equation}\label{eq:meshsizefunction}
h(\bm{r})=\min \left\{\gamma_1 Z_{1}^{-\frac{2}{5}} r_{1}^{\frac{6}{5}}, \cdots, \gamma_1 Z_{M}^{-\frac{2}{5}} r_{M}^{\frac{6}{5}}, \gamma_2\right\},
\end{equation}
where $r_j = |\bm{r}-\bm{R}_j|$ represents the distance to the $j$-th
nucleus, $\gamma_1$ controls the resolution of the mesh, and
$\gamma_2$ is the largest allowed mesh size. Note that
\eqref{eq:meshsizefunction} implies that the closer to the nucleus,
the smaller the mesh size, i.e., the denser the mesh grid, which is as
desired. Moreover, the distribution of the mesh grid around the
nucleus with larger charge is also denser than that around the nucleus
with a smaller charge. This can be verified from
\cref{fi:H2OGmsh} which shows an radial mesh example for the water
molecule (H$_2$O).

\begin{figure}[htbp]
	\includegraphics[width=0.3\textwidth]{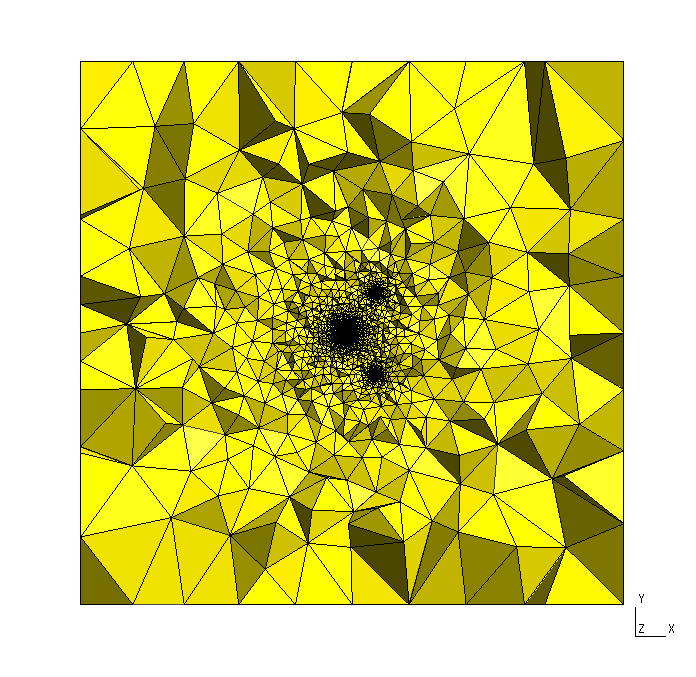}\quad
	\includegraphics[width=0.3\textwidth]{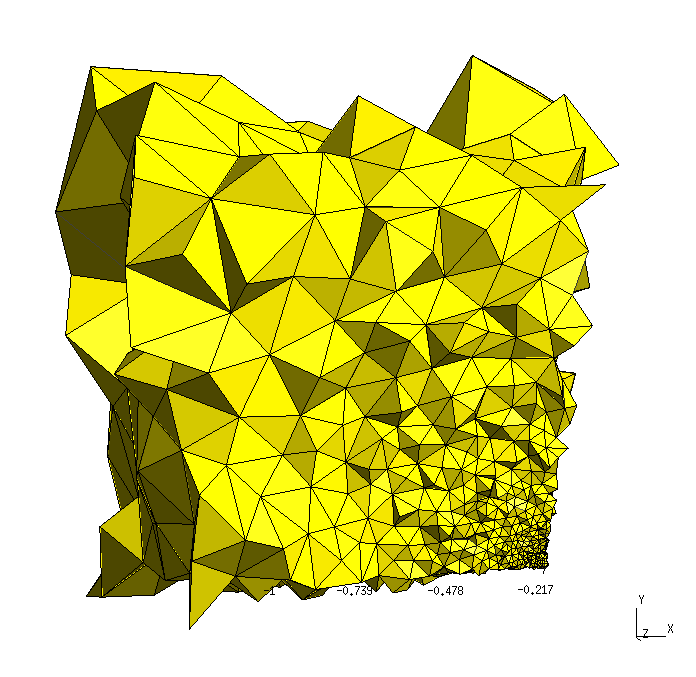}\quad
	\includegraphics[width=0.3\textwidth]{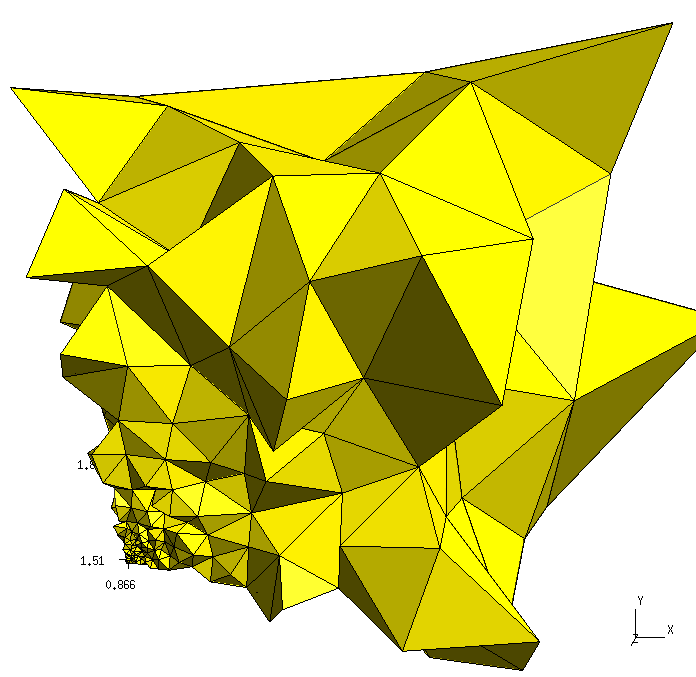}
	\caption{Left: the three dimensional mesh for molecule H$_2$O
		using the mesh size function \eqref{eq:meshsizefunction} with $\gamma_1=0.15, \gamma_2=8$. Middle: the mesh around the oxygen
		nucleus $(-0.217,0,0)$ in X-Y plane
		$[-1.217,-0.217] \times [0,1]$, on which the element shapes are
		kept. Right: the mesh around the hydrogen nucleus
		$(0.866,1.509,0)$ in X-Y plane
		$[0.866,1.866] \times [1.509,2.509]$. Generated by the software
		Gmsh v3.0.6 \cite{geuzaine2009gmsh}. \label{fi:H2OGmsh}}
\end{figure}

In the following comparison, we choose a same randomly generated initial guess, $X^0\in\Rnp$ satisfying ${X^0} \zz B{X^0}=I$, for different methods. Given a random matrix $V\in\Rnp$ from the pseudo-random number generator, $X^0$ is generated by the Cholesky-based Gram--Schmidt technique \cite{genovese2008daubechies}, i.e., $V=X^0 R$, where $R\in\R^{p\times p}$ is an upper triangular matrix. However, it is known that the SCF method may suffer a lot from divergence. For the sake of fairness, we stabilize SCF  by improving the random initial guess with the imaginary time propagation (ITP) method \cite{kuang2019on} only when it diverges. By contrast, the numerical experiments in \cref{sec:num} show that our algorithm behaves robust regardless of different initial guesses.

\subsection{Post-processing: eigenvalue evaluation in the last step\label{subsection:post-procedure}}
In view of the presented infeasible methods, it is sufficient to output results such that  $X$ satisfies the orthogonality constraint {$X\zz BX=I$}. However, if we want to extract the desired wavefunctions	from the eigenvectors of the generalized eigenvalue problem \eqref{eq:KS-gep} or the other physical quantities based on the eigenvalues, we need to introduce a post-processing. This is due to the fact that $X$ only provides an orthogonal basis of the desired eigenspace rather than the eigenvectors.

This can be implemented by solving a small $p\times p$ eigenvalue problem, $(X^{\top}HX) \tilde{X} = \lambda \tilde{X}$, with the Rayleigh-Ritz procedure to get the eigenvalues $\lambda_i, i=1,\dots,p$ and updating $X$ as $X = X\tilde{X}$ to get the wavefunctions. Note that this procedure is called only for once in the algorithm and it is of size $p\times p$. Consequently, its computational cost can be ignored compared to solving the optimization problem.

To verify the effectiveness of the post-procedure, we compute the eigenvalues of the Kohn--Sham equation of CH$_4$ system on the radial mesh with $n=100127, p=5$ for SCF and PCAL. 
The computational domain for this example is set as $[-20,20]^3$, and the results are listed in \cref{tab:post-eig}.  When the post-procedure is imposed, the eigenvalues from PCAL are well ordered and agree with eigenvalues from SCF. Moreover, it verifies that the post-procedure does not affect the energy value. In the practical simulations, the post-procedure step will be imposed as the final step of PCAL.
\begin{table}[htbp]
	\small
	\centering
	\caption{Eigenvalue and energy evaluations for example
		CH$_4$. }
	\label{tab:post-eig}
    {\resizebox{\textwidth}{!}{
	\begin{tabular}{crrrrrr}
		\toprule
		&$\lambda_0$&$\lambda_1$&$\lambda_2$&$\lambda_3$&$\lambda_4$&$E_{KS}$ \\ \midrule
		SCF &-9.75599&-0.66451&-0.38832&-0.38830&-0.38825&-40.24109\\
		PCAL + post-processing &-9.75599&-0.66451&-0.38832&-0.38830&-0.38825&-40.24109
		\\ \bottomrule

	\end{tabular}}}
\end{table}

\subsection{Choices of parameters}\label{subsec:parameters}

There are two major parameters in the algorithm PCAL. In view of \cref{fi:vs}, the penalty parameter $\beta=1$ works well for PCAL, and hence $1$ is
set as the default value of $\beta$ in PCAL. Next, we investigate the proximal parameter $\eta_k$, whose reciprocal is the stepsize for the gradient-descent step in \crefalg{alg:PCAL}. As suggested in \cite{gao2018parallelizable}, the Barzilai--Borwein (BB) strategy \cite{BB} is an efficient way to produce the stepsize, 
\begin{equation*}
{\eta_k^{\mathrm{BB}1}} := \frac{\abs{\jkh{S^{k-1},{Y^{k-1}}}}}{\jkh{S^{k-1},S^{k-1}}}, \quad
\mbox{or} \quad {\eta_k^{\mathrm{BB}2}} :=\frac{{\jkh{Y^{k-1},{Y^{k-1}}}}}{\abs{\jkh{S^{k-1},Y^{k-1}}}},
\end{equation*}
where $S^k = X^k - X^{k-1}$, $Y^k = \nabla_{X} \cLb(X^k,\Lambda^k) -  \nabla_{X} \cLb(X^{k-1},\Lambda^{k-1})$. It has other variations such as the Alternating BB strategy \cite{Dai_Fletcher_2005},
\begin{equation*}
\eta_k^{\mathrm{ABB1}} := \left\{
\begin{array}{cl}
\eta_k^{\mathrm{BB}1},& \mbox{for odd}~~k,\\
\eta_k^{\mathrm{BB}2},& \mbox{for even}~~k,
\end{array} \right. \quad
\mbox{or}\quad
\eta_k^{\mathrm{ABB2}} := \left\{
\begin{array}{cl}
\eta_k^{\mathrm{BB}2},& \mbox{for odd}~~k,\\
\eta_k^{\mathrm{BB}1},& \mbox{for even}~~k.
\end{array}
\right.
\end{equation*}
We test PCAL with four choices of the parameter $\eta_k$ on several testing
problems. The number of iterations to achieve convergence is recorded in \cref{tab:ABBcmp}. The notation ``-'' represents that the stopping criterion has not reached after $1000$ iterations. This table reveals that PCAL with $\eta_k^{\mathrm{BB2}}$ behaves robust and has the best performance on number of iterations. As a result, we choose $\eta_k^{\mathrm{BB2}}$ as our default proximal parameter in the practical simulations.

\begin{table}[htbp]
	\small
	\centering
	\caption{Number of iterations with different proximal parameters. \label{tab:ABBcmp}}
	\begin{tabular}{rrrrrr}
		\toprule
		\multicolumn{1}{c}{} & He &LiH&CH$_4$&H$_2$O& C$_6$H$_6$\\ \midrule
		BB1& 409& - &- &-&-\\ 
		BB2& 46 &54 &75 &60& 144\\ 
		ABB1&86 &90 &129 &180 &291 \\ 
		ABB2&75 &74 &90 &119 &256\\ \bottomrule     
	\end{tabular}
\end{table}

\section{Numerical Examples}\label{sec:num}
In this section, we  numerically investigate the performance and parallel
efficiency of the algorithm PCAL in all-electron calculations under
the presented framework. 

We test the classical SCF method and MOptQR for comparisons. MOptQR is a manifold-based optimization method which can be applied to the KSDFT \cite{zhang2014gradient}. All these methods are able to fulfill the solving part in the framework described in \cref{fi:frameKS}. They are different in the main iteration: SCF solves a linear eigenvalue problem; PCAL produces a column-wise gradient-descent update; MOptQR searches along the Riemannian antigradient and projects the step onto the manifold by QR factorization. We choose the locally optimized block preconditioned conjugate gradient (LOBPCG) method \cite{knyazev2001toward} as the linear eigenvalue solver in SCF. Moreover, a simple mixing scheme is adopted for SCF, namely,  $\rho^{(k+1)} = \alpha \tilde{\rho}^{(k+1)} + (1-\alpha)\rho^{(k)}$  where $\tilde{\rho}^{(k+1)}$ is the electron density obtained from solving the $k$-th step eigenvalue problem and $\alpha$ is the mixing parameter. In both SCF method and MOptQR, the orthogonalization process is implemented by the Cholesky-based Gram-Schmidt technique \cite{genovese2008daubechies}, which is shown to be more efficient than commonly-used Gram-Schmidt procedures.

In the serial setting, the leading order of computational costs is $\mathcal{O}(np^2)$ among these methods. The reason is that BLAS3 operations, such as $X^{\top}(BX)$, dominate the computing. While the function evaluation does not have a crucial impact on the cost due to the sparsity of discretized Hamiltonian $H$ and mass matrix $B$. In the parallel setting, the total computation cost is divided into parallel and non-parallel  parts. The above-mentioned leading cost $\mathcal{O}(np^2)$ (BLAS3) belongs to the parallel part. However, the orthogonalization process in SCF and MOptQR whose complexity is $\mathcal{O}(p^3)$ cannot be efficiently parallelized. When $p$ is large, this cost is unaffordable in all-electron calculations. Conversely, PCAL is orthogonalization-free and completely consists of BLAS3 operations, and thus benefits a lot from parallel computing. These claims can be verified in the following experiments.

\subsection{Ground state calculations}\label{sec:numerical:1}
In this subsection, we test  PCAL with SCF and MOptQR in  all electron calculations of a list of atom and molecules under serial setting. For all the systems, the computational domain is set to be $[-20,20]^3$. The mesh size function~\eqref{eq:meshsizefunction} is applied to generate the nonuniform mesh for each example. Note that the parameters in~\eqref{eq:meshsizefunction}  are chosen as $\gamma_1=0.15, \gamma_2=8$ for C$_6$H$_6$ and C$_{12}$H$_{10}$N$_2$,  and $\gamma_1=0.125, \gamma_2=8$ for the others. The preconditioner \eqref{eq:precondition-T} is used in all the methods. For the system C$_{12}$H$_{10}$N$_2$, the initial guess is generated by ITP (see \cref{sec:radialmesh}), and we choose the mixing parameter $\alpha=0.15$ for SCF to make it converge. For the other systems, we choose a random initial guess and the mixing parameter $\alpha=0.3$.

The detailed numerical results are listed in \cref{tab:KS2}, \cref{fi:vsconvergence} and \cref{fi:pcal}. We observe from \cref{tab:KS2} that: 1)~the total energy $\Et$ obtained by PCAL agrees with SCF and MOptQR; 2) PCAL behaves more efficient than SCF and MOptQR in terms of the running time ``CPU(s)"; 3) the number of iterations ``$N_\mathrm{iter}$" in PCAL is less than MOptQR. Note that the iteration numbers of SCF are always the smallest but conversely	the CPU time. This is due to that the inner iterations, i.e., solving the linear eigenvalue problem, are required in each SCF iteration. The efficiency of PCAL can be also observed in \cref{fi:vsconvergence} for the example C$_6$H$_6$, from which we find that PCAL takes the least CPU time to converge at a given accuracy. In addition, the convergence results for	PCAL are demonstrated in \cref{fi:pcal}.  The first column displays the isosurface of	the electron density, the last three columns present the convergence	history of energy,  substationarity and feasibility violation, respectively. We observe that the feasibility violation of PCAL gradually decreases until it converges. Note that the post-processing is not shown in this figure. In the He example, the feasibility violation is close to the machine accuracy since the normalization procedure is equivalent to the orthogonalization procedure in the case of $p=1$.

\begin{table}[htbp]
	\centering
	\caption{The results in Kohn--Sham total energy minimization\label{tab:KS2}} 
	{\resizebox{\textwidth}{!}{
	\begin{tabular}{c|ccrcr|ccrcr}
		\toprule
		Solver& $\Et$ & $kkt$ & $N_\mathrm{iter}$& $fea$ & CPU(s) & $\Et$ & $kkt$ & $N_\mathrm{iter}$& $fea$ & CPU(s) \\ \midrule
		\multicolumn{1}{c}{} & \multicolumn{5}{c}{He, $n=34481, p=1$} & \multicolumn{5}{c}{LiH, $n=63725, p=2$}\\ \midrule
		SCF  &-2.86809&9.58$_{-9}$& 36&2.88$_{-15}$&127&-7.98190&2.15$_{-7}$&39 &2.09$_{-14}$&617\\
		MOptQR&-2.86808&3.56$_{-8}$& 36&3.55$_{-15}$&82&-7.98190&1.36$_{-7}$&70 &2.70$_{-15}$&317\\
		PCAL &-2.86808&5.99$_{-9}$& 46&1.62$_{-15}$&99&-7.98190&1.25$_{-7}$&54 &2.24$_{-15}$&264\\\midrule
		
		\multicolumn{1}{c}{} & \multicolumn{5}{c}{CH$_4$, $n=141189, p=5$} & \multicolumn{5}{c}{H$_2$O, $n=149616, p=5$}\\ \midrule
		SCF  &-40.23775&4.83$_{-7}$& 39&6.09$_{-14}$&3788&-75.83672&1.07$_{-7}$& 44&3.64$_{-14}$&4246\\
		MOptQR&-40.23775&1.24$_{-7}$& 93&2.67$_{-14}$&1721&-75.83672&1.48$_{-7}$& 74&2.45$_{-14}$&1413\\
		PCAL &-40.23775&5.66$_{-6}$& 75&1.59$_{-14}$&1283&-75.83672&1.33$_{-7}$& 60&4.82$_{-14}$&1219\\\midrule
		
		\multicolumn{1}{c}{} & \multicolumn{5}{c}{C$_6$H$_6$, $n=241939, p=21$} & \multicolumn{5}{c}{C$_{12}$H$_{10}$N$_2$, $n=522149, p=48$}\\ \midrule
		SCF  &-231.05824&5.11$_{-7}$& 41&1.73$_{-13}$&43901&-571.60648&2.57$_{-8}$&94 &1.64$_{-13}$&379482\\
		MOptQR&-231.05824&3.60$_{-7}$&269&5.14$_{-14}$&21238&-571.60648&7.30$_{-8}$&501&1.63$_{-13}$&225856\\
		PCAL &-231.05824&3.71$_{-7}$&144&7.35$_{-14}$&11013&-571.60648&5.37$_{-8}$&148&2.29$_{-13}$&89116\\
		\bottomrule
	\end{tabular}}}
\end{table} 

\begin{figure}[htbp]
	\centering
	\subfigure[Energy value]{
	\includegraphics[width=0.45\textwidth]{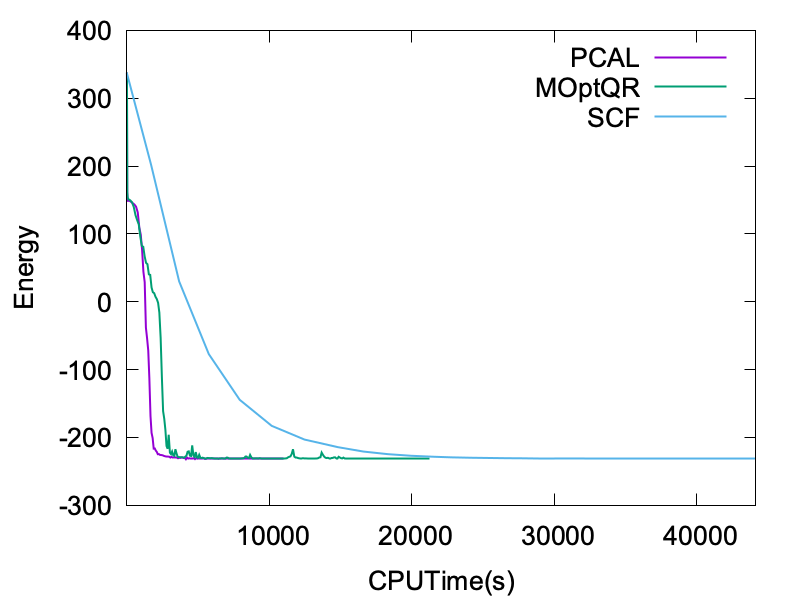}}
	\qquad
	\subfigure[Substationarity]{
	\includegraphics[width=0.45\textwidth]{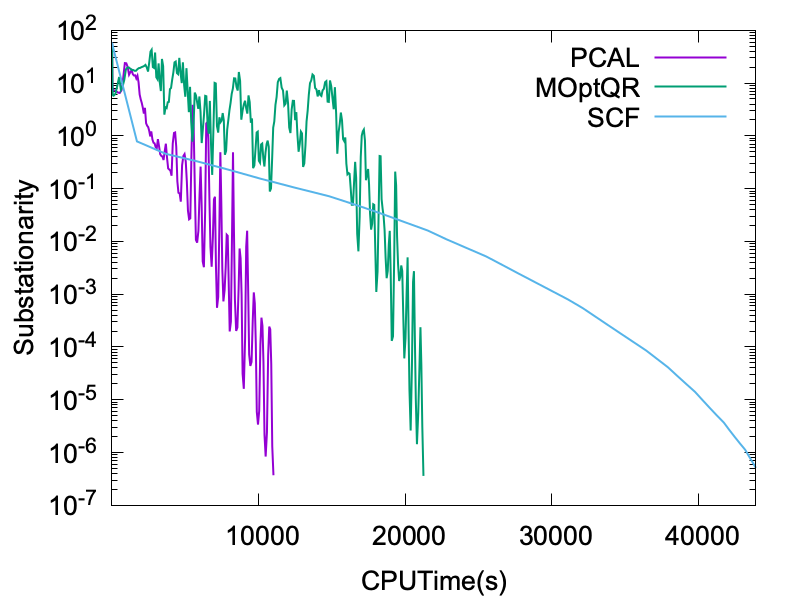}}

        \caption{A comparison with different solvers for example
		C$_6$H$_6$. \label{fi:vsconvergence}}
\end{figure}

\begin{figure}[htbp]
	\centering
	\includegraphics[width=0.24\textwidth]{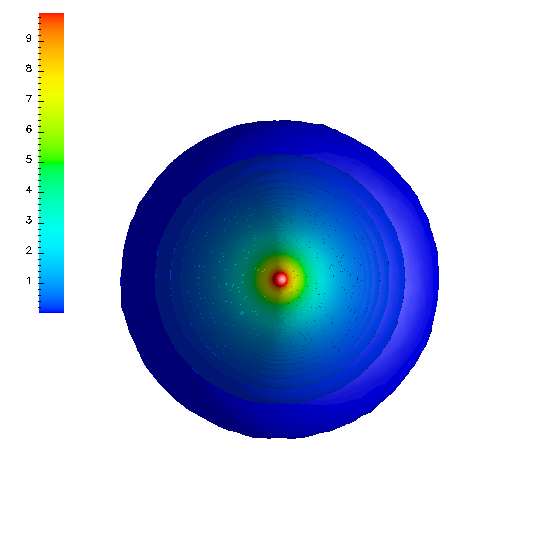}\hfill
	\includegraphics[width=0.24\textwidth]{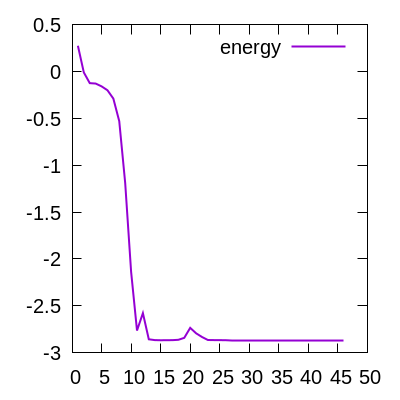}
	\includegraphics[width=0.24\textwidth]{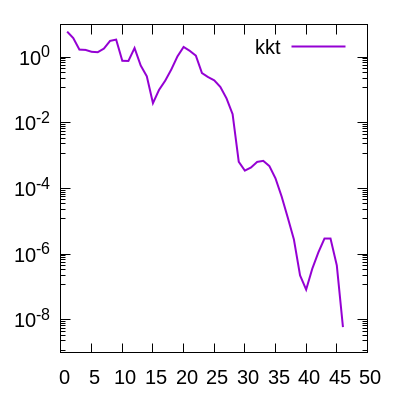}
	\includegraphics[width=0.24\textwidth]{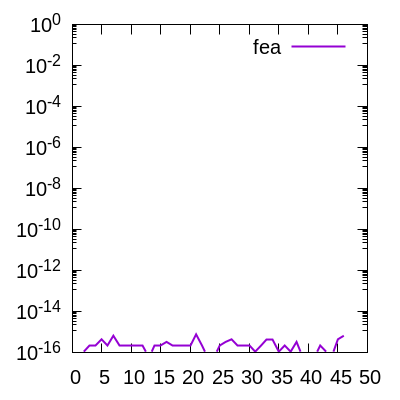}

	\includegraphics[width=0.24\textwidth]{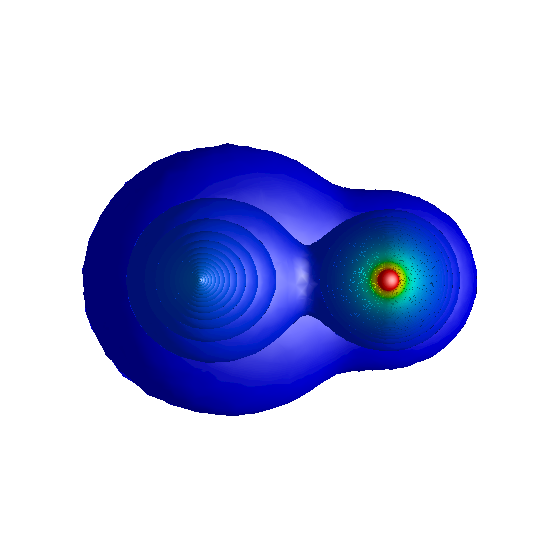}\hfill
	\includegraphics[width=0.24\textwidth]{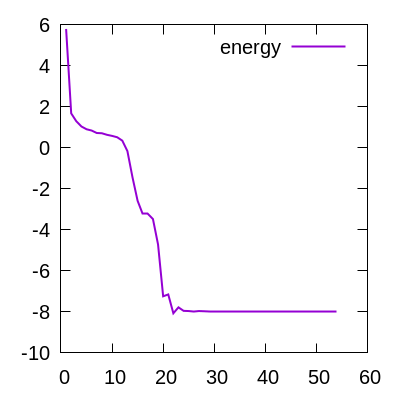}
	\includegraphics[width=0.24\textwidth]{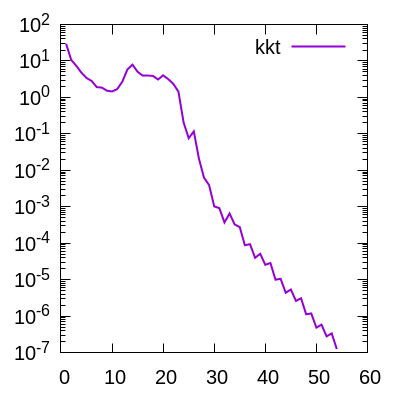}
	\includegraphics[width=0.24\textwidth]{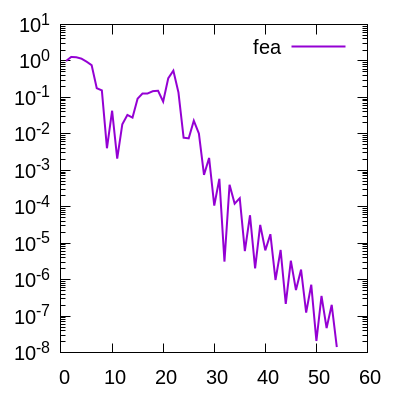}

	\includegraphics[width=0.24\textwidth]{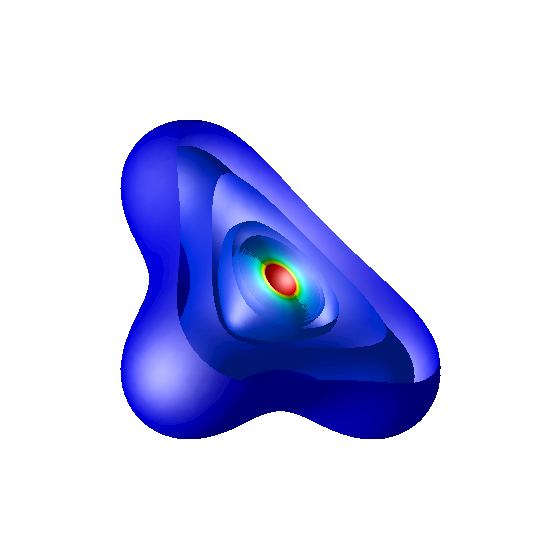}\hfill
	\includegraphics[width=0.24\textwidth]{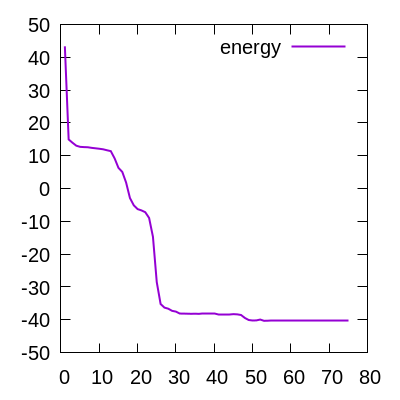}
	\includegraphics[width=0.24\textwidth]{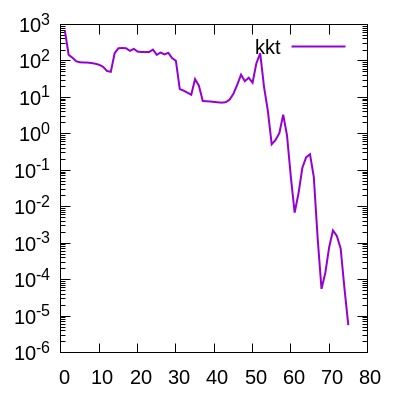}
	\includegraphics[width=0.24\textwidth]{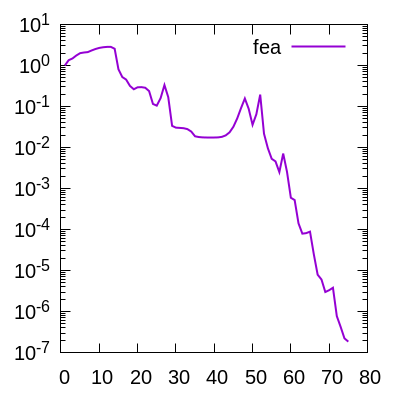}

	\includegraphics[width=0.24\textwidth]{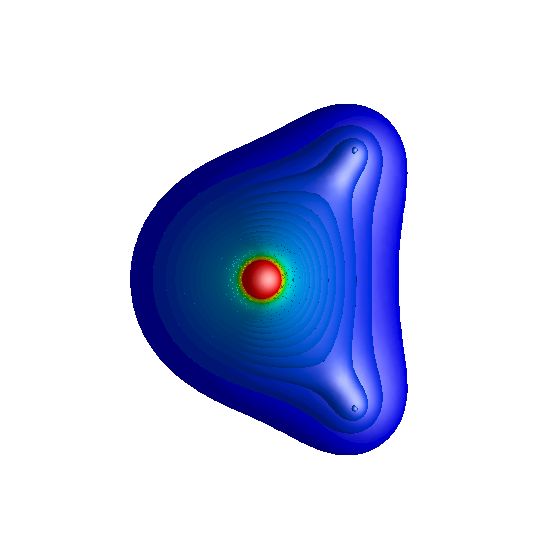}\hfill
	\includegraphics[width=0.24\textwidth]{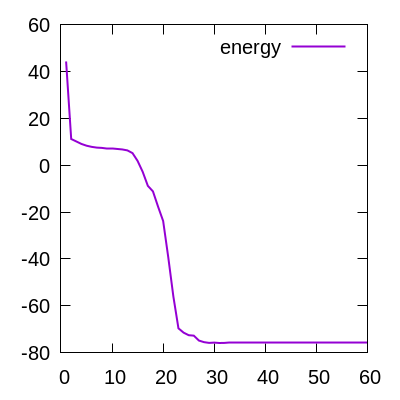}
	\includegraphics[width=0.24\textwidth]{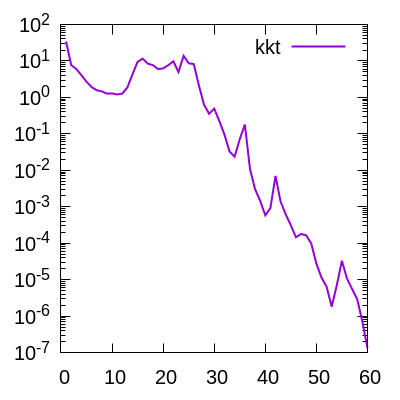}
	\includegraphics[width=0.24\textwidth]{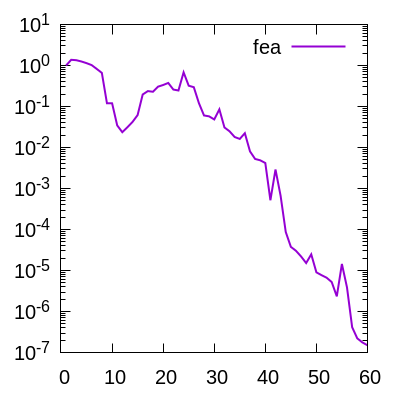}

	\includegraphics[width=0.24\textwidth]{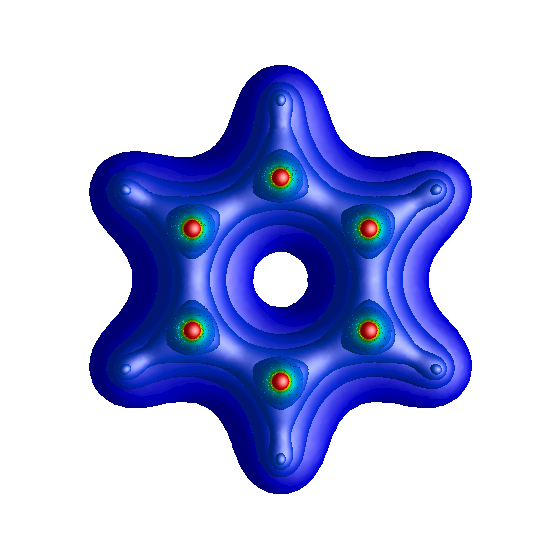}\hfill
	\includegraphics[width=0.24\textwidth]{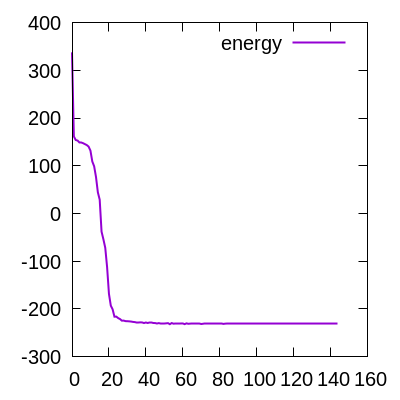}
	\includegraphics[width=0.24\textwidth]{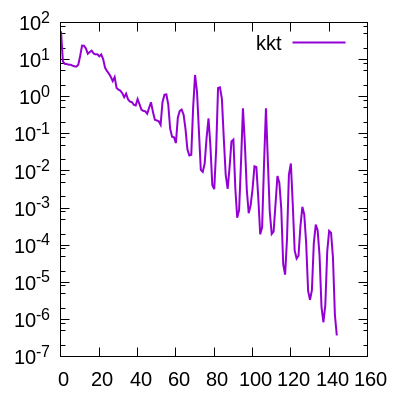}
	\includegraphics[width=0.24\textwidth]{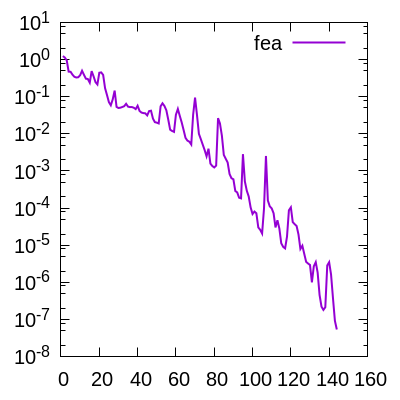}

	\includegraphics[width=0.24\textwidth]{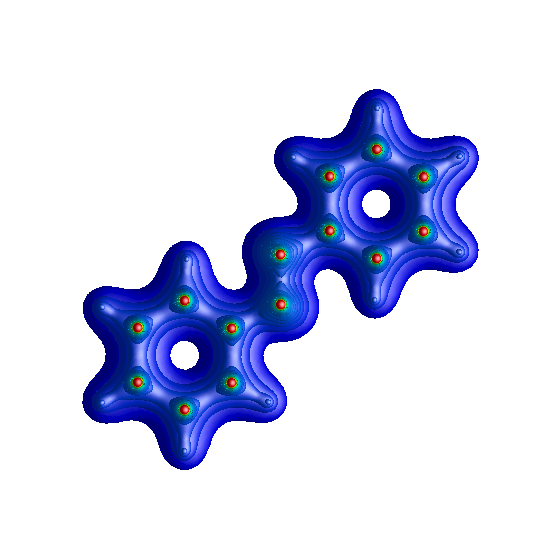}\hfill
	\includegraphics[width=0.24\textwidth]{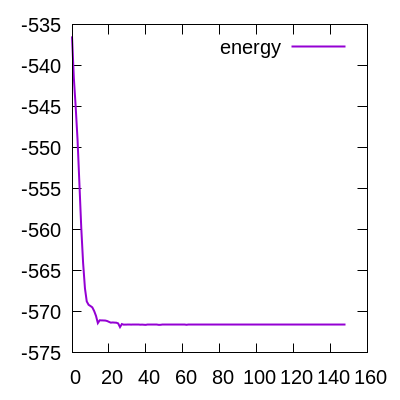}
	\includegraphics[width=0.24\textwidth]{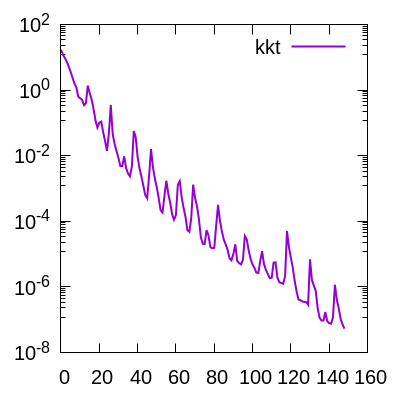}
	\includegraphics[width=0.24\textwidth]{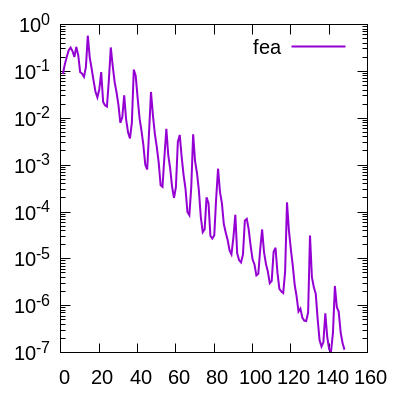}

	\caption{Convergence history of PCAL for He, LiH, CH$_4$, H$_2$O, C$_6$H$_6$,	C$_{12}$H$_{10}$N$_2$ (from top to bottom). The left column displays the isosurface of each molecule. $x$-axis for the right 3 columns	stands for the iteration step. \label{fi:pcal}}
\end{figure}

\subsection{Scalability}\label{sec:scalability}
In this subsection, we investigate the parallel efficiency of PCAL. We first test all the algorithms on a single core and record the computational proportions of parallel and non-parallel part in the total cost. We fix $n$ to be around $100000$ and choose different $p$, namely, the different molecules. By adjusting the parameters in \cref{sec:radialmesh}, we control the number of mesh grids $n$ being as close as possible to $100000$. The testing examples are  BF$_3$ (16), C$_{12}$H$_{10}$N$_2$ (48), C$_{60}$ (180), C$_{96}$ (288), and C$_{192}$ (576). The numerical results are displayed in \cref{fi:PervsP}. We observe that the parallel part of PCAL will dominate the total cost when $p$ becomes large. It is even higher than $99\%$ When $p\ge 180$. This means that PCAL is suitable for parallel computing, especially when the scale of a system is very large. Meanwhile, it can be found that the non-parallel part of SCF and MOptQR becomes large when $p$  increases. The main reason is the cubic complexity $\mathcal{O}(p^3)$ of orthogonalization process, which cannot be parallelized.

\begin{figure}[htbp] 
	\centering
	\subfigure[Parallel]{
		\includegraphics[width=0.45\textwidth]{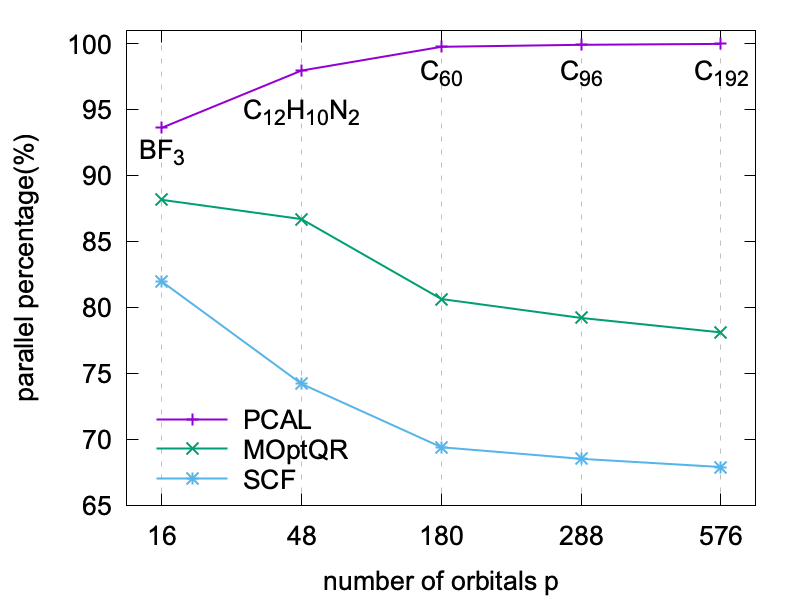}}\qquad
	\subfigure[Non-parallel]{
	\includegraphics[width=0.45\textwidth]{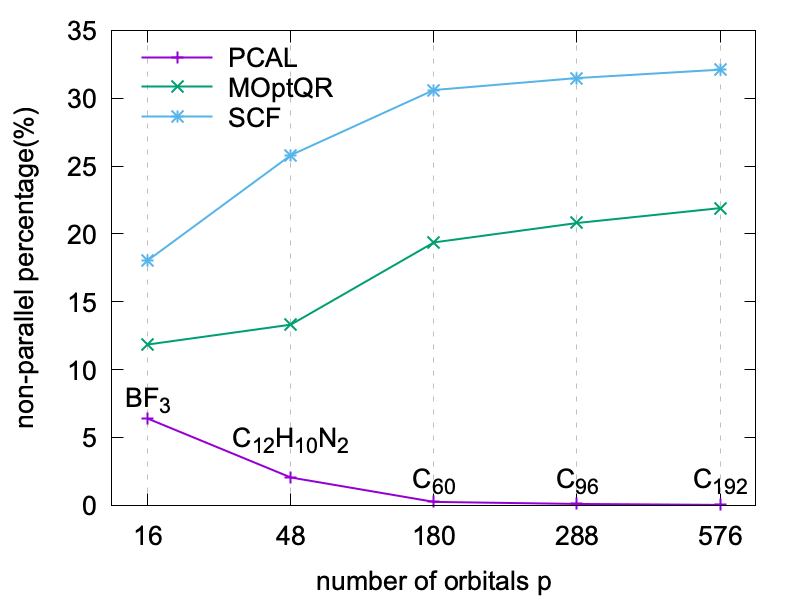}}
	\caption{Parallel proportion versus number of orbitals $p$.\label{fi:PervsP}}
\end{figure}

We next examine the scalability of PCAL in the parallel setting. The testing molecule is C$_{384}$ which has $1152$ occupied orbitals. The number of mesh grids $n$ is set to be $380233$. We run the code on different numbers of cores  $\{4, 8, 12, 16\}$. The corresponding speedup factor is defined as
\begin{equation*}
\text { speedup-factor }(m)=\frac{\text { wall-clock time for 4-core run }}{\text { wall-clock time for a } m\text {-core run }}.
\end{equation*}
The results are presented in \cref{fi:c384sf}, from which we observe that the speedup factor of PCAL is close to the ideal one, and it achieves $3.76$ for $16$ cores. However, MOptQR has the low scalability and its speedup factor increases slowly. Note that the results of SCF are not recorded since the divergent phenomenon is observed. In view of \cref{fi:PervsP}, even if we have the convergent results of SCF, it can be justifiably expected that the speedup factor of SCF will be smaller than that of MOptQR. In summary,  the orthogonalization-free algorithm PCAL shows higher scalability and great potential than SCF and MOptQR.

\begin{figure}[htbp]
  \centering
  \parbox{\textwidth}{
    \parbox{0.31\textwidth}{
      \centering
      \subfigure[Structure]
      {\includegraphics[width=0.2\textwidth]{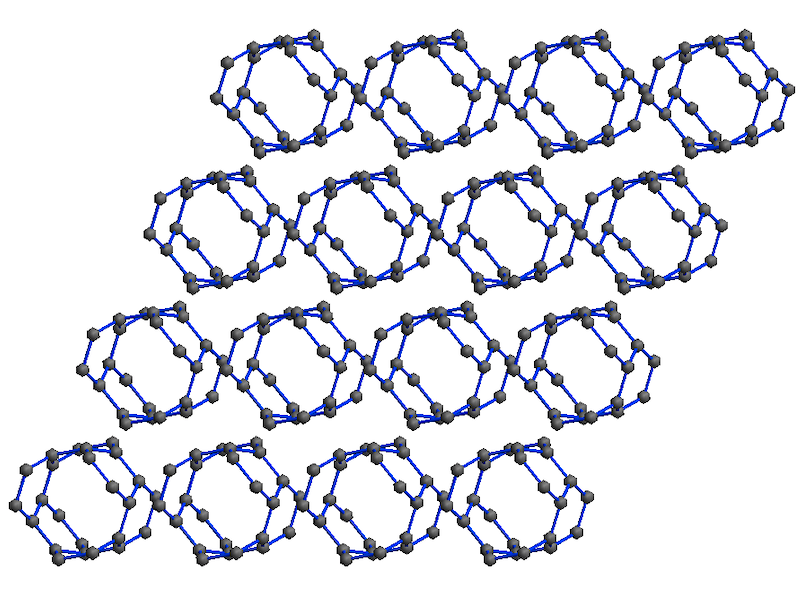}}
      \subfigure[Isosurface]
      {\includegraphics[width=0.31\textwidth]{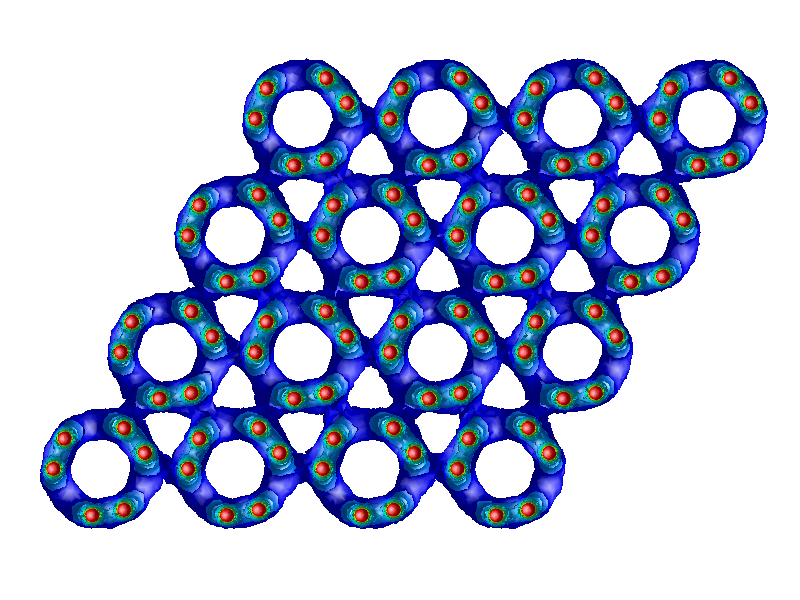}}
    }
    \hspace{9mm}
    \parbox{0.6\textwidth}{
      \subfigure[Speedup factor]
      {\includegraphics[width=\hsize]{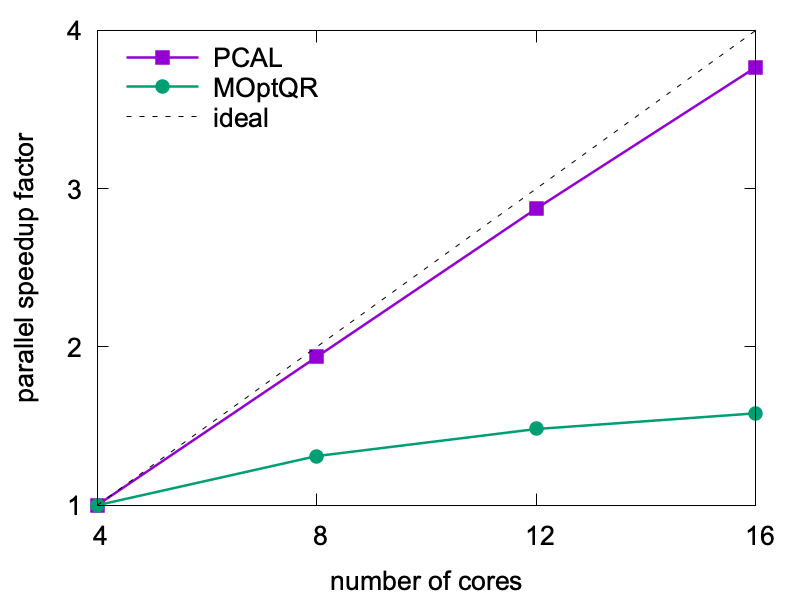}}
    }
  }
  
  \caption{Example C$_{384}$ with $n=380233$, $p=1152$.\label{fi:c384sf}}
\end{figure}

\section{Conclusion}
Based on the finite element method and PCAL algorithm, a scalable approach is proposed in this paper for the ground state solution of a given quantum system. To resolve the singularity introduced from the all-electron model, a radial mesh is generated according to the structure of the system, then the optimization problem is discretized in the associated finite element space. To avoid the efficiency bottleneck for large scale systems, i.e., the  orthogonalization of those orbitals, the original PCAL method is extended and applied in this paper for solving the discretized optimization problem. A novel preconditioner is designed in the extended PCAL method, which generally accelerates the convergence in the simulations.

Comprehensive numerical experiments are implemented for different molecules. The effectiveness of the proposed method is well demonstrated by the comparison among the proposed method, the classical SCF method, and the MOptQR method in serial computing. Meanwhile, the robustness of the proposed method is fully demonstrated by its insensitivity to the initial guess and the algorithm parameters. The feature of the proposed method on improving the efficiency by avoiding the orthogonalization procedure is displayed clearly by the huge reduction of the CPU time in the comparison to the SCF method. More importantly, the excellent scalability of the proposed method is successfully shown in an experiment on a relatively large scale electronic system.

To improve the proposed method, the $h$-adaptive mesh method will be introduced for dynamically adjusting the finite element space according to the obtained numerical solutions. Furthermore, the preconditioner introduced in the PCAL method deserves more investigation in the following study, which has a chance to effectively accelerate the convergence of the numerical method towards the ground state. The improved method will be used for the numerical simulations of the Born-Oppenheimer molecular dynamics, to show the potential on the practical applications. The results will be reported in the forthcoming paper.

\bibliographystyle{siamplain}
\bibliography{bibfile}

\end{document}